\newcommand{\rephrase}[3]{\noindent\textbf{#1 #2}.~\emph{#3}}
\title{On the Complexity of Realizing Facial Cycles\thanks{This work was partially supported by MIUR Project ``AMANDA'' under PRIN 2012C4E3KT and by DFG grant WA 654/21-1.}}
\author{Giordano Da Lozzo\inst{1} \and Ignaz Rutter\inst{2}}
\institute{%
Department of Engineering, Roma Tre University, Italy
\and
Karlsruhe Institute of Technology, Germany}
\let\doendproof\endproof
\renewcommand{\endproof}{~\hfill$\qed$\doendproof}
\newtheorem{observation}{Observation}
\newcommand{\remove}[1]{\xspace}
 \newcommand{\eps}{\ensuremath{\varepsilon}} \newcommand{\apxT}{\ensuremath{\tilde{T}}} \newcommand{\opt}{\ensuremath{\mathrm{opt}}}
 \newcommand{\mislong}{{\sc Maximum Independent Set}\xspace} 
\newcommand{\fcc}{{\sc Facial $\mathcal C$-Cycles}\xspace} 
\newcommand{\mfcc}{{\sc Max} \fcc}
  \DeclareMathOperator{\skel}{skel} 
  \DeclareMathOperator{\expd}{exp} 
  \DeclareMathOperator{\pert}{pert} 
  \DeclareMathOperator{\gain}{gain}
\newcommand{\peg}{\textsc{Peg}\xspace} \newcommand{\pegs}{\textsc{Peg}s\xspace}
\renewcommand{\todo}[2][]{\@bsphack\@todo[#1]{\textcolor{black}{#2}}\@esphack\ignorespaces}
\begin{document}

\maketitle

\begin{abstract}
  We study the following combinatorial problem.  Given a planar graph~$G=(V,E)$ and a set of simple cycles~$\mathcal C$ in~$G$, find a planar embedding~$\mathcal E$ of~$G$ such that the number of cycles in~$\mathcal C$ that bound a face in~$\mathcal E$ is maximized. 
  We establish a tight border of tractability for this problem in biconnected planar graphs by giving conditions under which the problem is NP-hard and showing that relaxing any of these conditions makes the problem polynomial-time solvable.
  Moreover, we give a $2$-approximation algorithm for series-parallel graphs and a $(4+\eps)$-approximation for biconnected planar graphs. 
\end{abstract}

\section{Introduction}
\label{sec:introduction}

A planar graph is a graph that can be embedded into the plane without
crossings.  While there exist infinitely many such embeddings, the
embeddings for connected graphs can be grouped into finitely many
equivalence-classes of \emph{combinatorial embeddings}, where two
embeddings are \emph{equivalent} if the clockwise cyclic order of the
edges around each vertex is the same.  Since a graph may admit
exponentially many different such embeddings, many drawing algorithms
for planar graphs simply assume that one embedding has been fixed
beforehand and draw the graph with this fixed embedding.  Often,
however, the quality of the resulting drawing depends strongly on this
embedding; examples are the number of bends in orthogonal drawings, or
the area requirement of planar straight-line drawings.  

Consequently, there is a long line of research that seeks to optimize
quality measures over all combinatorial embeddings.  Not
surprisingly, except for a few notable cases such as minimizing the
radius of the dual graph~\cite{adp-fmep-11,bm-ccvp-88,k-dsko-07}, many of these problems
have turned out to be NP-complete.  For example it is NP-complete to
decide whether there exists a planar embedding that allows for a
planar orthogonal drawing without bends or for an upward planar
drawing~\cite{gt-ccurpt-01}.  While there has been quite a bit of work on solving
these problems for special cases, e.g., for the orthogonal bend
minimization problem~\cite{blr-ogdie-16,brw-oogdcbc-13}, to the best of our knowledge,
approximation algorithms have rarely been considered.

Another way of describing a combinatorial embedding of a connected graph $G$ is
by describing its \emph{facial walks}, i.e., by listing the walks of
$G$ that bound a face.  In the case of biconnected planar graphs, the facial walks are simple, and we refer to them as {\em facial cycles}.
In this paper we consider the problem of optimizing the set of facial cycles, i.e., given a list $\mathcal C$
of cycles in a biconnected graph $G$, we seek an embedding $\mathcal E$ of $G$ such that
as many cycles of $\mathcal C$ as possible are facial cycles of
$\mathcal E$.  
% The set $\mathcal C$ can be given either explicitly or
% implicitly, e.g., as the set of all cycles that have length at
% most~$k$; in this work we will focus on the explicit version.
The research on this problem was initiated by Mutzel and
Weiskircher~\cite{mw-oocepg-99}, who gave an integer linear program
(ILP) for a weighted version of the problem.
Woeginger~\cite{w-epgmnlfc-02} showed that the problem is NP-complete
by showing that it is NP-complete to maximize the number of facial
cycles that have size at most~4.  {Da Lozzo} et
al.~\cite{djkr-pesuf-14} consider the problem of deciding whether
there exists an embedding such that the maximum face size is $k$.
They give polynomial-time algorithms for $k \le 4$ and show
NP-hardness for $k \ge 5$ and give a factor-6 approximation for
minimizing the size of the largest face.  Finally,
Dornheim~\cite{d-pgtc-02} studies a decision problem subject to
so-called {\em topological constraints}, which specify for certain
cycles of a planar graph two subsets of edges of the graph that have
to be embedded inside and outside the respective cycle; note that a cycle is a facial cycle if its interior is empty.
He proved NP-completeness and reduced the connected case to the biconnected case.

We note that, given a biconnected planar graph $G$ and a set $\mathcal
C$ of cycles of $G$, it can be efficiently decided whether there
exists a planar embedding of $G$ in which all cycles of $\mathcal C$
are facial cycles; for each cycle $C \in \mathcal C$, we subdivide
each edge of $C$ once and connect the subdivision vertex to a new vertex $v_C$.  If the resulting
graph is planar, the desired embedding of $G$ can be obtained by
removing all vertices $v_C$ and their incident edges.

\subsubsection*{Contribution and Outline.}
In this paper, we thoroughly study the problem \mfcc of maximizing the
number of cycles from a given set $\mathcal C$ that bound a face of a biconnected planar graph.  We
start with preliminaries concerning graphs and their combinatorial embeddings
in Section~\ref{sec:preliminaries}.  In
Section~\ref{sec:complexity} we show that \mfcc is
NP-complete even if each cycle in $\mathcal C$ intersects any other cycle in $\mathcal C$ in at most two vertices and intersects at most three other cycles of $\mathcal C$.
In Section~\ref{sec:polyn-solv-cases} we complement these results with
efficient algorithms for series-parallel and general planar graphs
when the cycles intersect only few other cycles in more than one
vertex.  Finally, in Section~\ref{sec:appr-algor}, we develop an
efficient approximation algorithm for the problem.  For series-parallel graphs we give
a 2-approximation, and for biconnected planar graphs we achieve a
$(4+\eps)$-approximation for $\eps > 0$.

% \red{\bf Lots to do here: }
% \begin{itemize}
% \item describe the relevance of determining combinatorial embeddings that maximize/minimize some
%   property~\cite{adp-fmdepg-11,bm-ocepgmcdm-90,kammer-dsko-07}
% \item describe relationship with {\em topological constraints}~\cite{dornheim-pgtc-02}
% \item define facial~$\mathcal C$-cycles
% \end{itemize}

\section{Preliminaries}
\label{sec:preliminaries}

%{\bf Drawings and Embeddings.} 
A \emph{planar drawing} $\Gamma$ of a graph maps vertices to points in the plane and edges to internally disjoint curves. Drawing $\Gamma$ partitions the plane into topologically connected regions, called  {\em faces}.
The bounded faces are \emph{internal} and the unbounded face is the \emph{outer face}. 
%A vertex (an edge) is {\em external} if it is incident to the outer face and {\em internal} otherwise.
A planar drawing determines a circular ordering of the edges incident to each vertex. Two planar drawings of a connected planar graph are \emph{equivalent} if they determine the same orderings and have the same outer face. A \emph{combinatorial embedding} is an equivalence class of planar drawings.

For the definition of the SPQR-tree of a biconnected graph and the concepts of {\em skeleton} $\skel(\mu)$ and {\em pertinent graph} $\pert(\mu)$ of a node $\mu$ of an SPQR-tree, and that of {\em virtual edge} of a skeleton, and {\em expansion graph} of a virtual edge we refer the reader to~\cite{djkr-pesuf-14}; for convenience we also provide a definition in Appendix~\ref{apx:SPQR}.

\section{Complexity}
\label{sec:complexity}

In this section we study the computational complexity of the
underlying decision problem \fcc of \mfcc, which given a biconnected
planar graph $G$, a set $\mathcal C$ of simple cycles of $G$, and a
positive integer $k \leq |\mathcal{C}|$ asks whether there exists a
planar embedding $\mathcal E$ of $G$ such that at least $k$ cycles in
$\mathcal C$ are facial cycles of $\mathcal E$.  \fcc is in NP, as we
can guess a set $\mathcal C' \subseteq \mathcal C$ of $k$ cycles and
then check whether an embedding of $G$ exists in which all cycles in
$\mathcal C'$ are facial cycles in polynomial time.  We show
NP-hardness for general graphs and for series-parallel graphs.

%We have the following complexity result.

\begin{theorem}
  \label{thm:general-hardness}
  \fcc is NP-complete, even if each cycle $C \in \mathcal C$
  \begin{compactenum}[(i)]
%  \item any two cycles in~$\mathcal C$ share at most two vertices, and
   \item intersects any other cycle in~$\mathcal C$ in at most two vertices, and
  \item  intersects at most three other cycles of~$\mathcal C$ in more than one vertex.
  \end{compactenum}
\end{theorem}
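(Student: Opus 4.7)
Membership in NP is already established. To prove hardness, the plan is to reduce from a bounded-occurrence variant of \textsc{3-SAT}, most conveniently \textsc{3-SAT(3)}, in which every variable occurs in at most three clauses and which is well known to be NP-complete. The goal is to translate a formula $\varphi$ with variables $X$ and clauses $\Phi$ into an instance $(G,\mathcal C,k)$ of \fcc with $k=|X|+|\Phi|$, such that $\varphi$ is satisfiable if and only if $G$ admits an embedding realizing $k$ cycles of $\mathcal C$ as faces.

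First, I would design a \emph{variable gadget} for each $x\in X$: a small biconnected planar graph that (up to the global flip of the whole embedding) has exactly two combinatorial embeddings, corresponding to the assignments $x=\mathtt{true}$ and $x=\mathtt{false}$. The gadget contains two distinguished cycles $C_x^{T}$ and $C_x^{F}$, of which the chosen embedding makes exactly one facial and pushes the other into the interior. A natural realization is a P-node in the SPQR-tree whose skeleton has a few parallel virtual edges whose expansion graphs carry the cycles $C_x^{T},C_x^{F}$ and three ``attachment edges'' (one per occurrence of $x$); the two cyclic orders around the P-node encode the two assignments and determine on which side of each attachment edge the corresponding clause gadget sits. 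Second, for each clause $c=\ell_1\vee\ell_2\vee\ell_3$ I would build a \emph{clause gadget} containing a distinguished cycle $C_c$, attached to the variable gadgets of its literals along those attachment edges, such that $C_c$ can be drawn as a face if and only if at least one of $\ell_1,\ell_2,\ell_3$ is placed on the ``correct'' side by its variable gadget, i.e., at least one literal is satisfied. Put $\mathcal C$ equal to the $2|X|$ variable cycles plus the $|\Phi|$ clause cycles; set $k=|X|+|\Phi|$. Since no embedding can make both $C_x^{T}$ and $C_x^{F}$ facial, at most $|X|$ variable cycles and at most $|\Phi|$ clause cycles are facial, with equality precisely when every clause is satisfied by the assignment read off from the variable gadgets.

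Finally, I would verify properties (i) and (ii). By construction, distinct distinguished cycles share only the attachment edges/vertices between the associated gadgets, so any two cycles of $\mathcal C$ meet in at most two vertices, giving (i). A variable cycle $C_x^{T/F}$ meets only the clause cycles of the (at most three) clauses containing $x$ in more than one vertex, and a clause cycle meets only the three variable cycles of its literals in more than one vertex; this yields (ii). The main obstacle is engineering the variable--clause interface so that (a) the variable gadget truly has only the two intended embeddings, (b) the truth value is faithfully transmitted to each of the (up to three) clause gadgets that attach to it, and (c) no pair of distinguished cycles develops a shared path of length $\geq 2$ that would create a three-vertex intersection; avoiding (c) typically requires inserting short auxiliary paths and subdivisions so that gadgets share only single edges or single vertex-pairs along their boundaries, and this is the delicate part of the construction.
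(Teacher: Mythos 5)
Your proposal is a plan rather than a proof: the entire technical content of the reduction---the actual variable and clause gadgets---is left unconstructed, and you acknowledge this yourself when you call the variable--clause interface ``the delicate part of the construction.'' The most serious issue is the clause gadget. A cycle $C_c$ bounds a face if and only if \emph{no} part of the graph is embedded in its interior, i.e., faciality is a conjunction over all embedding choices of components incident to $C_c$. Your specification ``$C_c$ can be drawn as a face if and only if at least one literal is satisfied'' is a disjunction, and it is not at all clear how a cycle passing along three variable gadgets could be facial when only one of them cooperates while the other two place material on the wrong side; as stated, the natural behaviour of such a gadget would encode ``all three literals satisfied,'' not ``at least one.'' Similarly, the variable gadget's claimed properties (exactly two embeddings up to a flip, and the impossibility of making $C_x^{T}$ and $C_x^{F}$ simultaneously facial) are asserted but not established; a P-node with five or more parallel virtual edges has many circular orders, and two cycles through the poles of a P-node \emph{can} both be facial when their virtual edges are adjacent in the skeleton embedding, so extra structure is needed and none is given. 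Without concrete gadgets one also cannot verify conditions (i) and (ii), which are the whole point of the theorem's strengthened statement.

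For comparison, the paper sidesteps all of these difficulties by reducing from \mislong in $3$-connected cubic planar graphs (shown NP-complete by the authors elsewhere): it takes the dual triangulation $H^\star$, lets $\mathcal C$ be its facial cycles, and adds for each edge $e=uv$ of $H^\star$ a degree-2 ``edge vertex'' $v_e$ adjacent to $u$ and $v$. The only embedding choices are which of the two faces incident to $e$ receives $v_e$, a facial cycle of $H^\star$ survives as a face of $G$ exactly when no edge vertex is placed inside it, and two cycles sharing an edge can never both survive---so realized cycles correspond exactly to independent sets of faces of $H^\star$, i.e., independent sets in $H$. Conditions (i) and (ii) then follow immediately from $H^\star$ being a $3$-connected triangulation. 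If you want to salvage your SAT-based route, you would need to design and analyze gadgets resolving the OR-versus-AND issue above (for instance by giving the clause gadget its own flip choices whose feasibility depends on space freed by a satisfied literal), and then re-verify the two intersection conditions; this is substantially more work than the paper's construction.
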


\begin{proof}[sketch]
\begin{figure}[tb!]
  \centering \subfloat{
    \includegraphics[page=2]{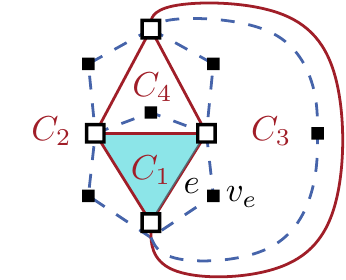}
  } \hfil \subfloat{
    \includegraphics[page=1]{fig/general-hardness}
  }
  \caption{Illustrations for the proof of Theorem~\ref{thm:general-hardness}. (a) Graph $H$ (black) and its planar dual $H^*$ (red). Vertex $v_1$ is the only vertex in the MIS of $H$. (b) An embedding of graph $G$ in which cycle $C_1$ (corresponding to the face of $H^*$ that is dual to the vertex
    $v_1$ of $H$) bounds a face.}
  \label{fig:general-hardness}
\end{figure}
We give a reduction from \mislong in 3-connected cubic planar graphs,
which we recently showed to be NP-complete~\cite{dr-shrtpg-16}.
Let~$H$ be a $3$-connected cubic planar graph.  Observe that $H$ has a
unique combinatorial embedding up to a flip.  We construct an instance
$\langle G, \mathcal{C}, k\rangle$ of \fcc as follows; see
Fig.~\ref{fig:general-hardness}.  Take the planar dual~$H^\star$
of~$H$ and take~$\mathcal C$ as the set of facial cycles of~$H^\star$.
Observe that $H^\star$ is a planar triangulation, since $H$ is cubic
and $3$-connected.  The graph~$G$ is obtained from~$H^\star$ by adding
for each edge~$e = uv \in E(H^\star)$ an \emph{edge vertex}~$v_e$ with
neighbors $u$ and $v$.  It is not hard to see that~$H$ admits an
independent set of size~$k$ if and only if~$G$ admits a combinatorial
embedding where~$k$ cycles in~$\mathcal C$ are facial (see appendix).
By construction $\mathcal C$ satisfies the restrictions in the
statement of the theorem.
\end{proof}

%We now show that \fcc is NP-complete even for series-parallel graphs.

\begin{theorem}
  \label{thm:sp-hard}
  \fcc is NP-complete for series-parallel graphs, even if
  any two cycles in~$\mathcal C$ share at most three vertices.
\end{theorem}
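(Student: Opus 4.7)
The plan is to reduce from \textsc{Hamiltonian Cycle}, which is well known to be NP-complete. The key observation I would exploit is that a biconnected series-parallel graph has an SPQR-tree consisting only of $S$- and $P$-nodes, so every combinatorial embedding is specified by choosing, for each $P$-node, a cyclic ordering of its parallel components; in a single $P$-node with $n$ parallel components the $n$ internal faces are precisely the cycles formed by pairs of components that are consecutive in the chosen cyclic ordering.

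Given a \textsc{Hamiltonian Cycle} instance $H$ on vertices $v_1,\dots,v_n$ with edge set $E_H$, I would construct $\langle G, \mathcal{C}, k\rangle$ as follows. Take two poles $s,t$ and join them by $n$ internally vertex-disjoint paths of length two, $\pi_1,\dots,\pi_n$, writing $w_i$ for the internal vertex of $\pi_i$. The resulting graph $G$ is series-parallel, since its SPQR-tree is a single $P$-node with $n$ parallel components. For each edge $v_iv_j\in E_H$ I would add to $\mathcal{C}$ the $4$-cycle $C_{ij}$ formed by $\pi_i$ and $\pi_j$, and set $k=n$.

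For correctness, an embedding $\mathcal{E}$ of $G$ corresponds to a cyclic ordering $\sigma$ of the paths, and $C_{ij}$ is facial in $\mathcal{E}$ if and only if $\pi_i$ and $\pi_j$ are cyclically adjacent in $\sigma$. Since $\sigma$ has exactly $n$ adjacent pairs, at most $n$ cycles of $\mathcal{C}$ can be simultaneously facial, and this maximum is attained if and only if the $n$ adjacent pairs of $\sigma$ all lie in $E_H$, i.e., if and only if $H$ has a Hamiltonian cycle. The intersection constraint is immediate: two cycles $C_{ij}, C_{ik}$ that share one path share exactly the three vertices $\{s,t,w_i\}$, while cycles on four pairwise distinct paths share only $\{s,t\}$.

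The entire argument thus reduces to the structural characterization of faces at a single $P$-node, which I would verify directly from the definition of combinatorial embedding. I do not foresee any substantial obstacle beyond this observation: the constructed graph has a trivial SPQR-tree (no $R$-nodes), and the three-vertex intersection bound is forced by the gadget itself, so no further engineering of $G$ or $\mathcal{C}$ is required.
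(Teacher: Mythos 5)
Your reduction is correct and follows essentially the same route as the paper: both reduce from \textsc{Hamiltonian Cycle} by building a series-parallel graph whose embeddings are governed by the cyclic order of parallel components at a single P-node, with each edge of $H$ encoded as a candidate facial cycle between two consecutive components, so that all $n$ candidates can be realized simultaneously exactly when $H$ is Hamiltonian. The paper uses a more elaborate $K_{2,3}$-based vertex gadget and starts from cubic graphs, but your bare length-two paths already enforce the stated bound of three shared vertices (namely $s$, $t$, and $w_i$ for two cycles sharing a path); the only slight imprecision is calling all $n$ faces ``internal'' --- one of them is the outer face, but its boundary is still a consecutive pair and still counts as a facial cycle, so the argument goes through.
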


\begin{proof}[sketch]
  We reduce from {\sc Hamiltonian Circuit}, which is known to be
  NP-complete even for cubic graphs~\cite{gjt-tphcpn-76}.  Let~$H$ be
  any such a graph.

Each vertex~$a \in V(H)$ is represented by the following gadget~$G_a$.  It consists of the graph~$K_{2,3}$, where the vertices in the partition of size~2 are denoted~$s^a$ and~$v^a$ and the other vertices are denoted~$u_1^a,u_2^a,u_3^a$, and of an additional vertex~$t^a$ adjacent to~$v^a$; see
Fig.~\ref{fig:sp-Ga}.  The graph~$G$ is obtained by merging the vertices~$s_a$ into a single vertex~$s$ and the vertices~$t_a$ into a single vertex~$t$.

\begin{figure}[tb!]
  \subfloat[]{
    \includegraphics[]{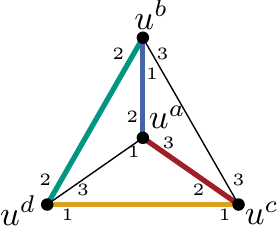}
    \label{fig:sp-hamcycle}
  }\hfil \centering \subfloat[]{
    \includegraphics[]{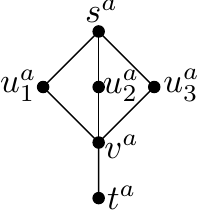}
    \label{fig:sp-Ga}
  } \hfil \subfloat[]{
    \includegraphics[]{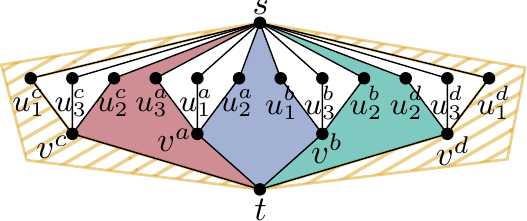}
    \label{fig:sp-G}
  }
  \caption{Illustrations for the proof of Theorem~\ref{thm:sp-hard}.
    (a) Cubic graph $H$ with a Hamiltonian circuit $Q$ (thick, colored
    edges).  (b) Gadget $G_a$ for a vertex $a \in V(H)$.  (c)
    Combinatorial embedding of $G$ corresponding to the Hamiltonian
    circuit $Q$ (facial cycles have the same color as the
    corresponding edge in $Q$).}
\end{figure}
To define~$\mathcal C$, we number the incident edges of
each vertex of~$H$ from~1 to~3.  If~$ab$ is the $i$-th edge for~$a$
and the $j$-th edge for~$b$, we define~$C_{ab} \in \mathcal C$ as the
cycle~$(s,u_i^a,v^a,t,v^b,u_j^b,s)$; see Fig.~\ref{fig:sp-hamcycle}
and~\ref{fig:sp-G}.  We claim that~$G$ admits a combinatorial embedding
with~$|V(H)|$ facial cycles in~$\mathcal C$ if and only if~$H$ is Hamiltonian.

If $Q$ is a Hamiltonian circuit of $H$, we embed $G$ such that the
order of the gadgets $G_a$ is the same as the order of the vertices in
$Q$.  We then choose embedding of the gadgets such that for each edge
$ab$ of $Q$ the cycle $C_{ab}$ bounds the face between $G_a$ and
$G_b$; this yields the claimed number of facial cycles in $\mathcal
C$.  Conversely, observe that if $C_{ab}$ is a facial cycle of an
embedding of $G$, then $G_a$ and $G_b$, where $ab$ is an edge of $H$,
must be consecutive in the circular order around $s$.  If $G$ has
$|V(H)|$ facial cycles in $\mathcal C$ it follows that the vertices
corresponding to the gadgets form a Hamiltonian circuit in this order.
\end{proof}

\section{Polynomial-time Solvable Cases}
\label{sec:polyn-solv-cases}

In this section we discuss special cases of \mfcc that admit a
polynomial-time solution.  In particular, we show that strengthening
any of the conditions in Theorem~\ref{thm:general-hardness} or
Theorem~\ref{thm:sp-hard} makes the problem tractable.

\subsection{General Planar Graphs}
\label{sec:general-algo}

In this section we study \mfcc when each cycle in~$\mathcal C$ intersects at most two other cycles in~$\mathcal C$ in more than one vertex. 
In this setting we give in Theorem~\ref{lem:sp-algo-atMostTwoVertices} a linear-time algorithm for biconnected planar graphs. Further, for the class of series-parallel graphs we present in Theorem~\ref{th:series-paralle-fpt} an FPT-algorithm with respect to the maximum number of cycles in~$\mathcal C$ sharing two or more vertices with any cycle in~$\mathcal C$.
We remark that our algorithms imply that strengthening {\em any} of the two conditions of Theorem~\ref{thm:general-hardness} results in a polynomial-time solvable problem. In particular,
\mfcc is polynomial-time if any two cycles in~$\mathcal C$ share at most one vertex.

%%%%%%%%%%%%%%%%%%%%%%%%%
%%%   BICONNECTED CASE  %%%%%
%%%%%%%%%%%%%%%%%%%%%%%%%

We compute the optimal solution in these cases by a dynamic program
that works bottom-up in the SPQR-tree $\cal T$ of $G$.  Let~$\mu$ be a
node of $\cal T$.  We call a cycle~$C \in \mathcal C$ \emph{relevant}
for~$\mu$ (or for~$\skel(\mu)$) if it projects to a cycle
in~$\skel(\mu)$, that is, the vertices of $C$ in $\skel(\mu)$ and the
edges of $\skel(\mu)$ that contain vertices or edges in $C$ form a
cycle $C'$ in $\skel(\mu)$ with at least two edges.  The cycle $C'$ is
the \emph{projection} of the cycle $C$ in $\skel(\mu)$.  Similarly, we
also define the projection of a cycle $C \in \mathcal C$ to
$\pert(\mu)$.  We denote the set of relevant cycles and of interface
cycles of a node $\mu$ by ${\cal R}(\mu)$ and by ${\cal I}(\mu)$,
respectively. Clearly, ${\cal I}(\mu) \subseteq {\cal R}(\mu)$. We
denote $I(\mu) = \{ X \subseteq {\cal I}(\mu) \mid |X|\leq 2 \}$.

Let $\mu$ be a node of $\cal T$. We have the following two important observations.
\begin{observation}\label{obs:atMostThree}
If each cycle in~$\mathcal C$ intersects at most two other cycles in~$\mathcal C$ in more than one vertex, then $|\mathcal I(\mu)|\leq 3$.
\end{observation}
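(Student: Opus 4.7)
The plan is to show that any two interface cycles of $\mu$ necessarily share both poles of $\mu$, so they pairwise intersect in at least two vertices; the bound $|\mathcal{I}(\mu)| \leq 3$ then follows directly from the hypothesis by a one-line counting argument.

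First I would recall from the SPQR-tree machinery that the pertinent graph $\pert(\mu)$ is attached to the rest of $G$ only at the two poles $u$ and $v$ of $\mu$; in particular $\{u,v\}$ is a separation pair, so removing $u$ and $v$ disconnects $V(\pert(\mu)) \setminus \{u,v\}$ from $V(G) \setminus V(\pert(\mu))$. An interface cycle of $\mu$ is a relevant cycle whose projection in $\skel(\mu)$ traverses the virtual edge representing the rest of $G$ (the parent edge of $\mu$). Equivalently, such a cycle $C \in \mathcal{C}$ has edges both inside and outside $\pert(\mu)$, so it must cross the separation $\{u,v\}$. Being a simple cycle, it enters $\pert(\mu)$ exactly once and leaves it exactly once, and the only possible crossing points are $u$ and $v$. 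Hence every $C \in \mathcal{I}(\mu)$ contains both poles.

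Next I would conclude: for any two interface cycles $C_1, C_2 \in \mathcal{I}(\mu)$ we have $\{u,v\} \subseteq V(C_1) \cap V(C_2)$, so $C_1$ and $C_2$ intersect in more than one vertex. Suppose for contradiction that $|\mathcal{I}(\mu)| \geq 4$ and fix any $C \in \mathcal{I}(\mu)$: then $C$ intersects each of the remaining at least three interface cycles of $\mu$ in more than one vertex, contradicting the hypothesis that every cycle of $\mathcal{C}$ intersects at most two other cycles of $\mathcal{C}$ in more than one vertex. Therefore $|\mathcal{I}(\mu)| \leq 3$.

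The only delicate point is the precise characterization of interface cycles, which rests on the SPQR-tree definitions deferred to Appendix~\ref{apx:SPQR}; in particular, the key fact that any interface cycle must use both poles $u$ and $v$ of $\mu$. Once that is made explicit, the counting step is essentially immediate and I do not foresee a further obstacle.
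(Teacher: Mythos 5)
Your proposal is correct and matches the paper's argument: the paper justifies the observation in one line by noting that all interface cycles of $\mu$ share at least the poles of $\mu$, which is precisely the key fact you establish before the counting step. Your additional justification via the separation pair $\{u,v\}$ is a sound elaboration of that same idea, not a different route.
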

% Observation:
\begin{observation}\label{obs:atMostTwo}
In any combinatorial embedding $\cal E$ of~$G$ at most two interface cycles of~$\mu$ can simultaneously bound a face in~$\cal E$.  
\end{observation}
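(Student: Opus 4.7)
The plan is to use the topological placement of $\pert(\mu)$ inside the fixed embedding $\mathcal E$ to show that any interface cycle bounding a face must trace one of only two distinguished $s$-$t$ paths in $\pert(\mu)$. Let $s,t$ be the poles of $\pert(\mu)$. By the definition of an interface cycle, every $C \in \mathcal I(\mu)$ passes through both $s$ and $t$ and decomposes as $C = P_C \cup Q_C$, where $P_C$ is an $s$-$t$ path lying entirely in $\pert(\mu)$ and $Q_C$ is an $s$-$t$ path in the remainder of $G$, sharing only $s$ and $t$ with $\pert(\mu)$.

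In $\mathcal E$ the subgraph $\pert(\mu)$ occupies a closed topological disk $D_\mu$ whose boundary is formed by two distinguished $s$-$t$ paths $L_\mu$ and $R_\mu$ in $\pert(\mu)$, its ``left'' and ``right'' sides. Suppose an interface cycle $C$ bounds a face $f$ of $\mathcal E$. Since $Q_C \subseteq \partial f$ lies outside $D_\mu$, the face $f$ cannot be contained in the interior of $D_\mu$; and since $P_C \subseteq \partial f$ lies in $D_\mu$, $f$ cannot lie in the exterior of $D_\mu$ either. Therefore $f$ is adjacent to $\partial D_\mu$, its intersection with $\pert(\mu)$ runs along $\partial D_\mu$ from $s$ to $t$, and consequently $P_C \in \{L_\mu, R_\mu\}$.

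Finally, for each choice of side, say $P_C = L_\mu$, the face $f$ is uniquely determined as the face of $\mathcal E$ immediately outside $D_\mu$ that is incident to every edge of $L_\mu$; its boundary in turn determines $Q_C$. Hence there is at most one interface cycle bounding a face with $P_C = L_\mu$, and analogously at most one with $P_C = R_\mu$, yielding the claimed bound of two. The main obstacle I expect is the topological step that forces $P_C$ to run along a full side of $\pert(\mu)$ rather than along a partial arc that zigzags between $L_\mu$ and $R_\mu$; I would justify this by arguing that $f$, being a single face, meets $\pert(\mu)$ in a single connected arc of $\partial D_\mu$, whose endpoints must be $s$ and $t$ since these are the only vertices shared between $\pert(\mu)$ and $G - (\pert(\mu) \setminus \{s,t\})$.
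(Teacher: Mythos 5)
Your proof is correct and takes essentially the same route as the paper's one-line justification: an interface cycle bounding a face can only bound one of the two faces lying immediately outside the disk occupied by $\pert(\mu)$ along its two boundary paths, i.e., the two faces incident to the virtual edge representing the parent of~$\mu$ in~$\skel(\mu)$, and each of these two faces is bounded by at most one cycle. Your write-up merely makes the underlying topology explicit.
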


Observation~\ref{obs:atMostThree} holds since all interface cycles of
a node~$\mu$ share at least the poles
of~$\mu$. Observation~\ref{obs:atMostTwo} holds since each interface
cycle can only bound one of the two faces incident to the virtual edge
representing the parent of~$\mu$ in~$\skel(\mu)$.

\begin{wrapfigure}[15]{r}{0.5\textwidth}
  \vspace{-20pt}
  \centering
  \includegraphics{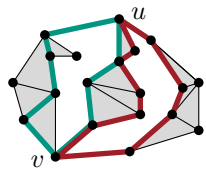}
  \caption{Graph and a P-node skeleton (shaded) with three virtual edges corresponding to children from left to right realizing none, the green and the red and only the red cycle, respectively.  The red cycle bounds a face since, in addition, the second and third child are adjacent in the embedding of the skeleton.}
  \label{fig:interface}
\end{wrapfigure}

Thus to the rest of~$G$ the only relevant information about a combinatorial embedding of~$\pert(\mu)$ is
\begin{inparaenum}[(a)]
\item the number of facial cycles in~$\mathcal C$ and
\item the set of cycles in $\mathcal C$ projecting to the facial cycles incident to the parent edge.
\end{inparaenum}

If $\cal E$ is a combinatorial embedding of~$\pert(\mu)$ and the elements of $I \in
I(\mu)$ has project to distinct faces
incident to the parent edge in $\pert(\mu)$, we say that $\cal E$ {\em
  realizes} $I$; see Fig.~\ref{fig:interface}.

For any node $\mu$ and any set~$I \in  I(\mu)$, we denote by~$T[\mu, I]$ the maximum number $k$  such that  
there exists a combinatorial embedding $\cal E$ of $\pert(\mu)$ that realizes $I$ and such that $k$ cycles in~$\cal C$ bound a face of $\cal E$ that is not incident to the parent edge of $\pert(\mu)$.
If no such embedding exists, we set $T[\mu,I] = -\infty$.
Due to Observation~\ref{obs:atMostTwo}, for convenience we extended the definition of $T$ to the case in which the size of $I$ is larger than $2$; in this case, we  define~$T[\mu,I] = -\infty$.

We show how to compute the entries of~$T$ in a bottom-up fashion in
the SPQR-tree $\cal T$ of $G$. It is not hard to modify the dynamic
program to additionally output a corresponding combinatorial embedding
of~$G$.  We root $\cal T$ at an arbitrary Q-node $\rho$.  Let $\phi$
be the unique child of $\rho$.
%Note that the maximum of these entries, over all possible choices of sets $C_I \subsetqe I(\phi)$,  is
Note that the maximum number of facial cycles in~$\mathcal C$ for any
combinatorial embedding of~$G$ is $\max_{I \in I(\phi)}{|I| +
  T[\phi,I]}$.  For any leaf Q-node~$\mu$, we have that~$T[\mu,I] = 0$
for each $I \in I(\mu)$.  The following lemmata deal with the
different types of inner nodes in an SPQR-tree.

\begin{lemma}
\label{lem:general-algo-snode}
Let~$\mu$ be an S-node with children~$\mu_i$, $i=1,\dots,k$.
Then~$T[\mu, I] = \sum_{i=1}^k T[\mu_i,I]$, for $I \in I(\mu)$.
Each entry $T[\mu,I]$ can be computed in
$O(k)$ time.
\end{lemma}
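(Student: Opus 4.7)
The plan is to exploit the fact that an S-node's skeleton is a single cycle and hence has a rigid combinatorial embedding up to a global flip, so any embedding of $\pert(\mu)$ is induced by a choice of embedding for each $\pert(\mu_i)$ together with the two possible embeddings of $\skel(\mu)$. First I would establish the structural observation that every cycle in $\mathcal R(\mu)$ projects to the entire skeleton cycle (since $\skel(\mu)$ contains exactly one cycle), and therefore $\mathcal R(\mu)=\mathcal I(\mu)$. Moreover, the projection of any such cycle $C$ to $\skel(\mu_i)$ consists of its path inside $\pert(\mu_i)$ together with the virtual edge representing the parent of $\mu_i$, which is a cycle through both poles of $\mu_i$; thus $\mathcal I(\mu)\subseteq\mathcal I(\mu_i)$, and every $I\in I(\mu)$ is also in $I(\mu_i)$, so that the terms on the right-hand side are well-defined.

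Next I would set up a cost-preserving correspondence between embeddings $\mathcal E$ of $\pert(\mu)$ realizing $I$ and tuples $(\mathcal E_1,\dots,\mathcal E_k)$ of embeddings of the children's pertinent graphs each realizing $I$. The forward direction is immediate: restricting $\mathcal E$ to each $\pert(\mu_i)$ yields an embedding in which the paths of the cycles in $I$ inside $\pert(\mu_i)$ bound distinct faces incident to the parent virtual edge of $\mu_i$. The main subtlety, and the step I expect to be the main obstacle, is the reverse direction: given per-child embeddings realizing $I$, the two elements of $I$ must end up on globally consistent sides of the skeleton cycle in $\pert(\mu)$. Here I would use that $\skel(\mu)$ is rigid up to a flip and that each child can be independently flipped without changing its facial cycles or its parent-incident outer faces, so after suitable per-child flips the assembly realizes $I$ with the same total count of facial cycles not incident to the parent edges.

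Finally, I would argue that the facial cycles of $\mathcal C$ in $\pert(\mu)$ not incident to the parent virtual edge decompose as the disjoint union, over $i=1,\dots,k$, of the facial cycles of $\pert(\mu_i)$ not incident to the parent virtual edge of $\mu_i$. A cycle $C\in\mathcal C$ bounding such a face of $\pert(\mu)$ is either not relevant for $\mu$ --- in which case it lies entirely inside some $\pert(\mu_i)$, where it bounds a face not incident to the parent edge of $\mu_i$, since otherwise that face would merge with neighboring outer faces of other children along the skeleton --- or it is an interface cycle of $\mu$, which by Observation~\ref{obs:atMostTwo} can bound only the two faces incident to the parent virtual edge and is therefore excluded. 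Combined with the correspondence above, this yields $T[\mu,I]=\sum_{i=1}^k T[\mu_i,I]$, computable in $O(k)$ time by a single traversal of the children once their entries are tabulated.
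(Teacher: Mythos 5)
Your proof is correct and follows essentially the same approach as the paper, which compresses the whole argument into the single observation that an embedding of $\pert(\mu)$ realizes $I$ if and only if each child's embedding realizes $I$. You have simply made explicit the supporting details (every relevant cycle of an S-node is an interface cycle traversing the whole skeleton, per-child flips restore side-consistency, and the counted facial cycles partition among the children), all of which are sound.
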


\begin{proof}
The lemma follows easily from the observation that a combinatorial embedding of~$\pert(\mu)$ realizes~$I$ if and only if each of its children realizes~$I$.
\end{proof}

\begin{lemma}
\label{lem:general-algo-pnode}
Let~$\mu$ be a P-node with children~$\mu_1,\dots,\mu_k$. Then
\[T[\mu,I] = \max_{I \subseteq C \subseteq {\cal R}(\mu)} (\sum_{i=1}^k T[\mu_i,C_{\mu_i}] + f(C))\,\,,\]
where
(i) $C_{\mu_i}= C \cap {\cal I}(\mu_i)$ and
(ii) $f(C) = |C \setminus I|$ if~$\skel(\mu)$ admits a planar embedding $\mathcal E$ where (a) each two virtual edges $e_i$ and $e_j$ corresponding to children $\mu_i$ and $\mu_j$ of $\mu$, respectively, such that $|C_{\mu_i} \cap C_{\mu_j}| = 1$ are adjacent in $\mathcal E$, and where (b) the virtual edges $e'$ and $e''$ corresponding to the
children $\mu'$ and $\mu''$ of $\mu$ such that $C_\mu' \cap I \neq \emptyset$ and $C_\mu'' \cap I \neq \emptyset$, respectively, are incident to the outer face of $\mathcal E$, and $f(C) = -\infty$ otherwise.
%that both intersect a cycle in~$C'$ are adjacent,
\end{lemma}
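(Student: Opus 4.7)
The plan is to establish the recurrence by exhibiting a correspondence between combinatorial embeddings of $\pert(\mu)$ realizing $I$ and pairs consisting of an embedding of $\skel(\mu)$ together with a choice, for each child $\mu_i$, of an embedding of $\pert(\mu_i)$ realizing some $C_{\mu_i} \subseteq \mathcal{I}(\mu_i)$. As a preliminary observation, since $\skel(\mu)$ is a bundle of parallel virtual edges between the two poles $s$ and $t$, every cycle of $\skel(\mu)$ uses exactly two virtual edges. Hence a cycle $C' \in \mathcal C$ is relevant for $\mu$ iff it passes through both poles and its projection consists of two virtual edges, corresponding to two children $\mu_i, \mu_j$, so that $C' \in \mathcal{I}(\mu_i) \cap \mathcal{I}(\mu_j)$. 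Such a $C'$ bounds a face of an assembled embedding of $\pert(\mu)$ iff both $\mu_i$ and $\mu_j$ realize $C'$ and the virtual edges $e_i, e_j$ are adjacent in the cyclic edge ordering of $\skel(\mu)$; moreover, this face is incident to the parent edge of $\mu$ iff it lies in the outer face of the skeleton embedding (taking the parent virtual edge to be on the outer face).

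For the inequality ``$\leq$'', let $\mathcal E$ attain $T[\mu,I]$ and set $C$ to be the set of relevant cycles of $\mu$ that bound a face of $\mathcal E$. Then $I \subseteq C \subseteq \mathcal{R}(\mu)$, and by the characterization each child $\mu_i$ inherits from $\mathcal E$ an embedding of $\pert(\mu_i)$ that realizes exactly $C_{\mu_i} = C \cap \mathcal{I}(\mu_i)$, contributing at most $T[\mu_i, C_{\mu_i}]$ facial cycles not incident to the parent edge of $\mu_i$. Each cycle in $C \setminus I$ bounds a face inside the P-node region strictly inside $\pert(\mu)$, contributing exactly $|C \setminus I|$ more. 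The cyclic ordering of virtual edges in $\skel(\mu)$ induced by $\mathcal E$ witnesses adjacency condition~(a) and outer-face condition~(b) in the definition of $f$, so $f(C) = |C \setminus I|$ and the corresponding summand on the right-hand side equals $T[\mu,I]$. For the converse ``$\geq$'', pick any $C$ with $I \subseteq C \subseteq \mathcal{R}(\mu)$ and $f(C) = |C \setminus I|$; combine the skeleton embedding promised by $f(C)$ with optimal embeddings of each $\pert(\mu_i)$ realizing $C_{\mu_i}$, gluing along the shared poles in the prescribed order. Condition~(a) forces every cycle of $C \setminus I$ to bound a face of the assembled embedding and condition~(b) forces every cycle of $I$ onto a face incident to the parent edge of $\pert(\mu)$, so the resulting embedding realizes $I$ with at least $\sum_i T[\mu_i, C_{\mu_i}] + |C \setminus I|$ facial cycles not incident to the parent edge.

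The main obstacle is the careful bookkeeping that disentangles three disjoint contributions to the face count: cycles strictly internal to each child, already recorded in $T[\mu_i, C_{\mu_i}]$; relevant cycles of $\mu$ that bound internal P-faces formed between adjacent children in $\skel(\mu)$, giving the $|C \setminus I|$ term; and the cycles of $I$ themselves, which are excluded because $T$ counts only facial cycles not incident to the parent edge. Once this partition is set up and the two positional constraints on $\skel(\mu)$ are packaged into $f(C)$, the recurrence follows by matching facial cycles on both sides.
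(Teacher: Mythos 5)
Your proof is correct and follows essentially the same route as the paper's: one direction extracts the set $C$ of realized relevant cycles (together with $I$) from an optimal embedding of $\pert(\mu)$ and observes that the adjacency and outer-face conditions encoded in $f$ are necessary, while the other direction assembles an embedding from the skeleton embedding witnessing $f(C)$ and optimal child embeddings realizing the sets $C_{\mu_i}$. Your version is in fact somewhat more explicit than the paper's terse argument (which dismisses the converse with ``it is not hard to see''), but the underlying decomposition and bookkeeping are the same.
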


\begin{proof}
  Consider an embedding of~$\pert(\mu)$ that embeds~$T[\mu,I]$ cycles of~$\mathcal C$ as facial cycles and the corresponding embedding $\mathcal E$ of $\skel(\mu)$.
  Let~$C\subseteq {\cal R}(\mu)$ denote the set of cycles in~$\mathcal C$ that are facial cycles in~$\mathcal E$ or that are in~$I$.
  Obviously, to make a cycle $c \in C \setminus I$ a facial cycle, each of the two children of $\mu$ that contain $c$ in their interface (i) must be adjacent in $\mathcal E$ and (ii) must both realize cycle $c$. Also, in order for the cycles in $I$ to bound the outer-face of the embedding of $\pert(\mu)$, the two children of $\mu$ containing
  such interface cycles (i) must be incident to the outer-face of $\mathcal E$ and (ii) must each realize one of these cycles in their interface.
  Hence~$T[\mu,C]$ is a lower bound on the number of facial cycles in~$\mathcal C$ in the embedding of~$\pert(\mu)$.  
  %Thus the given formula is certainly an upper bound on~$T[\mu,C_I]$.
  On the other hand, it is not hard to see that by picking the maximum over all subsets $C \subseteq {\cal R}(\mu)$ this bound is attained for the correct set of cycles~$C$.
\end{proof}

We note that the existence of a corresponding embedding for a P-node $\mu$ with $k$ children can be tested in $O(k)$ time for any set $C \subseteq {\cal R}(\mu)$, thus allowing us to evaluate~$f(C)$ efficiently as follows.
Consider the auxiliary multigraph~$O$ that contains a vertex for each virtual edge of~$\skel(\mu)$, except for the edge representing the parent of $\mu$, and two such edges are adjacent if and only if there is a cycle in~$C \setminus I$ that contains edges from both expansion graphs.
Also, if there exist two virtual edges in $\skel(\mu)$ containing edges from cycles in $I$, multigraph~$O$ contains an edge connecting them. 
A corresponding embedding exists if and only if~$O$ is either a simple cycle or it is a collection of paths.
In latter case,~$O$ can be augmented to a simple cycle and the order of the virtual edges along this cycle defines a suitable embedding of~$\skel(\mu)$.

Generally, the number of cycles in~${\cal R}(\mu)$ can be large.
However, if every cycle $C \in \mathcal C$ shares two or more vertices with at most $r$
other cycles in $\mathcal C$, the running time can be bounded as follows.

\begin{lemma}
  \label{lem:p-node-processing}
  Let~$\mu$ be a P-node with children~$\mu_1,\dots,\mu_k$ such that
  any cycle of~${\cal R}(\mu)$ shares two or more vertices with at
  most $r$ other cycles in~${\cal R}(\mu)$. For each set $I \in
  I(\mu)$, table $T[\mu,I]$ can be computed in~$O(r^2 2^r \cdot k)$
  time from~$T[\mu_i,\cdot]$ with~$i=1,\dots,k$.
\end{lemma}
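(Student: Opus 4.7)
The plan is to reduce Lemma~\ref{lem:p-node-processing} to a brute-force enumeration over a small family of candidate subsets, enabled by a tight structural bound on $\mathcal{R}(\mu)$ that follows directly from the sparsity hypothesis.

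First, I would establish that $|\mathcal{R}(\mu)| \leq r+1$. At a P-node $\mu$ the skeleton $\skel(\mu)$ consists of the two poles $s,t$ joined by a bundle of parallel virtual edges, so every cycle in $\skel(\mu)$ uses exactly two virtual edges and contains both poles. Consequently every cycle in $\mathcal{R}(\mu)$ passes through both $s$ and $t$ in $G$, and any two cycles in $\mathcal{R}(\mu)$ therefore share at least these two vertices. The hypothesis that each cycle of $\mathcal{R}(\mu)$ shares two or more vertices with at most $r$ others then immediately yields $|\mathcal{R}(\mu)| \leq r+1$.

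With this bound in place, the maximization in Lemma~\ref{lem:general-algo-pnode} for a fixed $I \in I(\mu)$ (with $|I|\le 2$ by Observation~\ref{obs:atMostTwo}) ranges over at most $O(2^r)$ subsets $C$ with $I \subseteq C \subseteq \mathcal{R}(\mu)$. For each such $C$ I would perform two computations: (a) evaluate $f(C)$ by constructing the auxiliary multigraph $O$ described right after Lemma~\ref{lem:general-algo-pnode} in $O(k)$ time, and checking whether its non-isolated part, which contains at most $2(r+1)$ vertices and at most $r+1$ edges, is a simple cycle or a disjoint union of paths that is augmentable to one, in a straightforward $O(r^2)$ pass; (b) compute $\sum_{i=1}^{k} T[\mu_i, C_{\mu_i}]$ by iterating over the children, where for each child $\mu_i$ the intersection $C_{\mu_i}=C\cap \mathcal{I}(\mu_i)$ is determined in $O(r)$ time (testing each of the $\le r+1$ members of $C$ for membership in $\mathcal{I}(\mu_i)$) and the precomputed entry $T[\mu_i,C_{\mu_i}]$ is retrieved in $O(1)$. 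This yields $O(rk)$ work per subset. Summing over all $O(2^r)$ subsets produces the claimed running time $O(r^2\cdot 2^r\cdot k)$.

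The main obstacle is really the structural bound $|\mathcal{R}(\mu)|\le r+1$, because that is the only place where the sparsity hypothesis is actually exploited; once it is available the remainder is a careful but routine accounting of per-subset work, and Observation~\ref{obs:atMostTwo} guarantees that no additional overhead arises from the interface set $I$ since its size is already bounded by~$2$.
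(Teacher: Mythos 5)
Your proposal is correct and follows essentially the same route as the paper: it invokes Lemma~\ref{lem:general-algo-pnode}, derives $|\mathcal{R}(\mu)|\le r+1$ from the fact that all relevant cycles of a P-node pass through both poles (the same reasoning behind Observation~\ref{obs:atMostThree}), and then enumerates the $O(2^r)$ subsets $C\supseteq I$, evaluating $f(C)$ via the auxiliary multigraph in $O(k)$ time each. Your per-subset accounting is slightly coarser ($O(rk)$ instead of $O(k+r)$, since each cycle in $C$ meets exactly two children), but this still fits within the stated bound.
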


\begin{proof}
  We employ Lemma~\ref{lem:general-algo-pnode}.  It is $|{\cal
    R}(\mu)| \leq r+1$, and $|I(\mu)| = O(r^2)$.  For each $I \in
  I(\mu)$ we need to consider all the sets $C \subseteq {\cal R}(\mu)$
  such that $I \subseteq C$.  There are $O(2^r)$ such sets $C$ and for
  each of them we evaluate $f(C)$ in $O(k)$ time.
\end{proof}

We now deal with~$R$-nodes.  Let~$\mu$ be an R-node with $k$ children
$\mu_1,\dots,\mu_k$, let $I \in I(\mu)$ and let~$C \subseteq \mathcal
R(\mu)$ with $C \supseteq I$ be a set of cycles that project to
distinct facial cycles of~$\skel(\mu)$.  Note that relevant cycles
of~$\mu$ that do not project to a facial cycle of~$\skel(\mu)$ can
never bound a face, and we can hence assume that such cycles have been
removed from~$\mathcal C$ in a preprocessing step, i.e., every
relevant cycle of~$\mu$ projects to a facial cycle of~$\skel(\mu)$.
We define~
\[
\gain(C, I) = \sum_{i=1}^k \left(T[\mu_i,C \cap \mathcal I(\mu_i)] -
T[\mu_i,I \cap I(\mu_i)]\right) + |C \setminus I(\mu)|.\label{eq:gain}
\]
It is not hard to see that, for two such sets of cycles~$C_1,C_2
\subseteq \mathcal R(\mu)$ with $I \subseteq (C_1 \cup C_2)$ and such
that no two cycles $C' \in C_1$ and $C'' \in C_2$ share a virtual edge
of~$\skel(\mu)$, we have~$\gain(C_1 \cup C_2, I) = \gain(C_1,I) +
\gain(C_2,I)$.

Let~$H$ be the subgraph of the dual of~$\skel(\mu)$ induced by the
faces that are projections of cycles in $\mathcal R(\mu)$.  Since we
assume that any two cycles in $\mathcal C$ share two or more vertices
with at most two other cycles in $\mathcal C$, the maximum degree of
$H$ is at most~2.  Once we have chosen an interface $I \in I(\mu)$ and
for each of the connected components $H_1,\dots,H_c$ of $H$ a
set~$C_i$ of cycles that we want to realize as faces for $H_i$, the
overall number of faces can be expressed as~$\sum_{i=1}^k T[\mu_i, I
\cap I(\mu_i) ] + \sum_{i=1}^c \gain(C_i)$.  In particular,
\[
T[\mu, I] = \sum_{i=1}^k T[\mu_i, I \cap
I(\mu_i)] + \max_{I \subseteq \bigcup_{i=1}^c C_i \subseteq R(\mu)} \sum_{i=1}^c \gain(C_i,I)\label{eq:5},
\]
where the maximization considers only those sets $C_i$ whose cycles
project to distinct faces of $\skel(\mu)$ and whose dual vertices are in $H_i$.

It thus remains to choose for each connected component $H_i$ of~$H$ a
set of cycles in $\mathcal C$ that project to faces that are vertices
of $H_i$ and that maximize the gain.  We exploit the fact that these
graphs have maximum degree~2 to give an efficient algorithm via
dynamic programming.

Let~$H'$ be such a connected component, which is either a path or a
cycle.  We observe that, if $H$ contains vertices corresponding to the
faces incident to the parent edge, then they are contained in the same
connected component of $H$.  In the following, we assume that $H'$
does not contain these vertices.  The other case can be treated
similarly, but requires also to take into account that a set $I \in
I(\mu)$ of cycles has to be realized.

Assume that~$H'$ is a path~$v_1,\dots,v_h$.  Each vertex~$v_i$ is
associated with a set~$\mathcal C_i \subseteq C$ of potential cycles
that can realize the face that corresponds to $v_i$.  Observe that,
since each such vertex corresponds to a facial cycle that is the
projection of some cycle in~$\mathcal C$, it follows that~$|\mathcal
C_i| = 1$ for~$i=2,\dots,h-1$ and that~$|\mathcal C_1|,|\mathcal C_h|
\le 2$ for $h \ge 2$ and $|\mathcal C_1| \le 3$ if $h=1$.  Otherwise
one such cycle would intersect too many other cycles in two or more
vertices.

We now compute the optimal solution by dynamic programming along the
path.  More precisely, we define~$P[i,\mathcal C']$ with~$i \in
\{1,\dots,h\}$ and~$\mathcal C' \subseteq \mathcal C_i$
with~$|\mathcal C'| \le 1$ as the maximum gain obtainable by any set
of cycles in~$\mathcal C_1 \cup \dots \cup \mathcal C_{i-1} \cup
\mathcal C'$.  Clearly~$P[1,\emptyset] = 0$ and~$P[1,C] =
\gain(\{C\})$.  For~$i>1$, observe that~$P[i,\emptyset] =
\max_{\mathcal C' \subseteq \mathcal C_{i-1}, |C'| \le 1}
P[i-1,\mathcal C']$ and for~$C \in \mathcal C_i$, it is $P[i,\{C\}] =
\max_{\mathcal C' \subseteq \mathcal C_{i-1}, |\mathcal C'| \le 1}
P[i-1, C'] + \gain(\mathcal C' \cup \{C\}, \mathcal C')$.  Note that
$\gain(\mathcal C' \cup \{C\}, \mathcal C')$ describes the gain of
realizing $C$ in addition to $\mathcal C'$.

This recurrence allows us to compute the optimal gain value in~$O(k)$
time if~$H'$ is a path of length~$k$.  Now assume that~$H'$ is a cycle
of length~$k$.  Observe that, in this case, each facial cycle has at
most one candidate cycle in~$\mathcal C$.  We exploit that either all
these cycles are chosen, or at least one of them is not chosen.  It is
not hard to compute the gain of the solution that chooses all facial
cycles.  Further, we try each facial cycle as the one that is not
chosen, leaving us with an instance that forms a path, to which we
apply the previous algorithm.  Altogether, in this way we can compute
the optimal gain value when $H'$ is a cycle of length~$k$ in~$O(k^2)$
time.  It is not hard to adapt the dynamic program to realize a given set of
cycles in $I(\mu)$.  We thus have the following lemma.

\begin{lemma}
  \label{lem:general-algo-rnode}
  Let~$\mu$ be an R-node with children~$\mu_1,\dots,\mu_k$.  There is an~$O(k^2)$-time algorithm for computing~$T[\mu,\cdot]$ from~$T[\mu_i,\cdot]$ for~$i=1,\dots,k$, provided that cycles in~$\mathcal C$ shares two or more vertices with at most two other cycles from~$\mathcal C$.
\end{lemma}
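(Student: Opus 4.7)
The plan is to exploit the fact that an R-node has a unique combinatorial embedding (up to a flip), which fixes a concrete set of candidate faces of $\skel(\mu)$ that may be realized as projections of cycles in $\mathcal R(\mu)$. After the preprocessing step that discards from $\mathcal R(\mu)$ every cycle not projecting to a facial cycle of $\skel(\mu)$, the entry $T[\mu,I]$ for a fixed interface $I \in I(\mu)$ already admits the decomposition
\[ T[\mu,I] = \sum_{i=1}^k T[\mu_i, I \cap \mathcal I(\mu_i)] + \max_C \sum_{i=1}^c \gain(C_i, I), \]
so it suffices to maximize the second term over all admissible choices of $C_1,\dots,C_c$ on the connected components $H_1,\dots,H_c$ of the auxiliary dual-subgraph $H$.

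First I would construct $H$ and invoke the assumption on pairwise overlaps to conclude, as in the preceding text, that $H$ has maximum degree $2$ and thus decomposes into vertex-disjoint paths and cycles. For each path component of length $k'$ I would run the path-DP $P[\cdot,\cdot]$ defined above in $O(k')$ time. For each cycle component of length $k'$ I would break the cycle by guessing, for each face in turn, that its associated cycle is not realized; every such guess yields a path instance solvable in $O(k')$ time, so each cycle component is handled in $O((k')^2)$ total. The additivity of $\gain$ across components that share no virtual edge of $\skel(\mu)$ ensures that these per-component optima combine into a global optimum.

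The main obstacle is the interface handling. At most one component $H^\star$ of $H$ contains the two faces incident to the virtual edge representing the parent of $\mu$; on $H^\star$ we are not free to choose arbitrarily, but must realize precisely the cycles prescribed by $I$ at the two interface faces. I would handle this by pinning the corresponding boundary cells of the DP table on $H^\star$: for a path component this fixes the values at the two endpoints, and for a cycle component it suffices to break the cycle at one of the two interface faces and then apply the path-DP with fixed boundary. If a cycle in $I$ is not a candidate at its prescribed interface face, we return $-\infty$ for the pair $(\mu,I)$.

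Finally, since the total length of all components is $O(k)$ and at most one component contributes a quadratic term, a single interface $I$ is processed in $O(k^2)$ time. As $|I(\mu)| = O(1)$ under the standing assumption (by Observation~\ref{obs:atMostThree}), looping over all $I \in I(\mu)$ costs $O(k^2)$ overall, matching the claimed bound.
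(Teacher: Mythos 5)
Your proposal follows essentially the same route as the paper: the same gain-based decomposition of $T[\mu,I]$ over the connected components of the degree-$2$ dual subgraph $H$, the same path DP, and the same quadratic treatment of cycle components; you in fact supply more detail than the paper on pinning the interface faces, which the paper leaves as a routine adaptation of the DP. One small but genuine flaw: for a cycle component your enumeration only guesses, for each face in turn, that its associated cycle is \emph{not} realized, so it never evaluates the candidate solution in which \emph{all} cycles of that component are realized. The paper explicitly treats this case separately (``either all these cycles are chosen, or at least one of them is not chosen''), and without it your algorithm can return a suboptimal value whenever the all-chosen solution is the unique optimum; the fix is a single extra evaluation per cycle component. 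A second, cosmetic point: the claim that ``at most one component contributes a quadratic term'' is not justified, since several components of $H$ may be cycles; the $O(k^2)$ bound nevertheless survives because $\sum_i (k_i')^2 \le \left(\sum_i k_i'\right)^2 = O(k^2)$.
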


Altogether, Lemmas~\ref{lem:general-algo-snode},
\ref{lem:p-node-processing}, and \ref{lem:general-algo-rnode} imply
the following theorem.

\begin{theorem}
  {\sc Max Facial $\mathcal C$-Cycles} can be solved in~$O(n^2)$ time if every cycle in~$\mathcal C$ intersects at most two other cycles in more than one vertex.
\end{theorem}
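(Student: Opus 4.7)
The plan is to combine Lemmas~\ref{lem:general-algo-snode}, \ref{lem:p-node-processing}, and \ref{lem:general-algo-rnode} into a bottom-up dynamic program on the SPQR-tree~$\mathcal T$ of~$G$. First I would compute $\mathcal T$ in linear time, root it at an arbitrary Q-node~$\rho$ whose unique child is~$\phi$, and for each node~$\mu$ precompute $\mathcal R(\mu)$ and $\mathcal I(\mu)$; since every cycle intersects at most two others in more than one vertex, Observation~\ref{obs:atMostThree} gives $|\mathcal I(\mu)|\leq 3$ and hence $|I(\mu)| = O(1)$. A preprocessing pass would also discard, for each R-node $\mu$, those relevant cycles that do not project to a facial cycle of $\skel(\mu)$, since such cycles can never be realized through $\mu$; this is needed to invoke Lemma~\ref{lem:general-algo-rnode}.

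Then I would process $\mathcal T$ bottom-up, filling in all entries of $T[\mu, \cdot]$ using Lemma~\ref{lem:general-algo-snode} at S-nodes, Lemma~\ref{lem:p-node-processing} (with parameter $r=2$, so the factor $r^2 2^r$ is $O(1)$) at P-nodes, and Lemma~\ref{lem:general-algo-rnode} at R-nodes. At the root, the optimum is
\[
\max_{I \in I(\phi)} \bigl(|I| + T[\phi,I]\bigr),
\]
as argued immediately before Lemma~\ref{lem:general-algo-snode}, and a witnessing embedding can be reconstructed by the usual back-pointer technique from the dynamic program.

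For the running time, let $k_\mu$ denote the number of children of $\mu$. Because $|I(\mu)| = O(1)$, S- and P-nodes cost $O(k_\mu)$ each, and R-nodes cost $O(k_\mu^2)$. The total number of children summed over all nodes of $\mathcal T$ is $O(n)$, so the S- and P-nodes together contribute $O(n)$. For R-nodes we use $\sum_\mu k_\mu^2 \leq \bigl(\sum_\mu k_\mu\bigr)^2 = O(n^2)$, yielding the claimed overall bound of $O(n^2)$.

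The only delicate point I foresee is bookkeeping the interface in recursive calls: when querying $T[\mu_i, \cdot]$ from the parent $\mu$, the argument must be $I\cap \mathcal I(\mu_i)$ (respectively $C\cap \mathcal I(\mu_i)$ in Lemmas~\ref{lem:general-algo-pnode} and~\ref{lem:general-algo-rnode}), and one must ensure that any cycle that spans several virtual edges of $\skel(\mu)$ is realized consistently by \emph{every} child whose pertinent graph it meets. This consistency is already encoded in the definitions of $f(C)$ and $\gain(C,I)$ used in the preceding lemmas, so the dynamic program composes without additional work.
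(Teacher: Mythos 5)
Your proposal is correct and follows essentially the same route as the paper: the theorem is obtained by combining Lemma~\ref{lem:general-algo-snode}, Lemma~\ref{lem:p-node-processing} with $r=2$ (so $|I(\mu)|=O(1)$ by Observation~\ref{obs:atMostThree}), and Lemma~\ref{lem:general-algo-rnode} in a bottom-up dynamic program over the SPQR-tree, reading off the optimum at the root as $\max_{I\in I(\phi)}(|I|+T[\phi,I])$. Your accounting of the running time (linear total for S- and P-nodes, $\sum_\mu k_\mu^2 = O(n^2)$ for R-nodes) matches the paper's claimed $O(n^2)$ bound.
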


\subsection{Series-Parallel Graphs}
\label{sec:sp-algo}

In this section, we consider \mfcc on series-parallel graphs.  Note
Combining the results from Lemma~\ref{lem:general-algo-snode} and
Lemma~\ref{lem:p-node-processing} yields the following.

\begin{theorem}\label{th:series-paralle-fpt}
  \mfcc is solvable in $O(r^2 2^r \cdot n)$ time for series-parallel
  graphs if any cycle in~$\mathcal C$ intersects at most $r$ other
  cycles.
\end{theorem}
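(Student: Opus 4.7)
The plan is to apply the dynamic program of Section~\ref{sec:general-algo} on the SPQR-tree $\cal T$ of~$G$, exploiting the fact that for series-parallel graphs $\cal T$ contains only S-, P- and Q-nodes. Thus only Lemma~\ref{lem:general-algo-snode} for S-nodes and Lemma~\ref{lem:p-node-processing} for P-nodes (together with the trivial base case for leaf Q-nodes) are needed; no handling of R-nodes is required.

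The first step is to verify that the local hypothesis of Lemma~\ref{lem:p-node-processing} is inherited from the global assumption that every cycle of $\mathcal C$ intersects at most $r$ other cycles. Any cycle of $\mathcal R(\mu)$ projects to a cycle of $\skel(\mu)$ that passes through both poles of $\mu$, so any two cycles in $\mathcal R(\mu)$ share the two poles and therefore intersect. Consequently $|\mathcal R(\mu)|\le r+1$, $|\mathcal I(\mu)|\le r+1$ and $|I(\mu)|=O(r^2)$, and every cycle in $\mathcal R(\mu)$ shares two or more vertices with at most $r$ other cycles in $\mathcal R(\mu)$, which is exactly the precondition of Lemma~\ref{lem:p-node-processing}.

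Then I would process $\cal T$ bottom-up from the leaves to a chosen root Q-node $\rho$, and return $\max_{I \in I(\phi)} (|I| + T[\phi,I])$, where $\phi$ is the unique child of $\rho$. For an S-node $\mu$ with $k$ children, Lemma~\ref{lem:general-algo-snode} gives $O(k)$ time per entry and thus $O(r^2 \cdot k)$ time for all $O(r^2)$ entries $T[\mu,\cdot]$. For a P-node $\mu$ with $k$ children, Lemma~\ref{lem:p-node-processing} bounds the cost of filling the entire table $T[\mu,\cdot]$ by $O(r^2 2^r \cdot k)$. Summing over all nodes of $\cal T$, and using that $\sum_{\mu} k_\mu = O(n)$, yields the claimed total running time of $O(r^2 2^r \cdot n)$.

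The only non-routine step is the propagation of the global intersection bound to the local bound on $|\mathcal R(\mu)|$ and $|\mathcal I(\mu)|$ at each P-node; once this is established, the theorem follows by a direct combination of the two lemmas already proved in Section~\ref{sec:general-algo}.
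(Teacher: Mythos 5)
Your proof is correct and follows essentially the same route as the paper, whose proof is simply the observation that the SPQR-tree of a series-parallel graph has no R-nodes, so Lemma~\ref{lem:general-algo-snode} and Lemma~\ref{lem:p-node-processing} combine to give the bound. Your additional verification that the global intersection bound yields the local hypothesis of Lemma~\ref{lem:p-node-processing} (via all relevant cycles of a P-node sharing its two poles) is a sound and welcome elaboration of a step the paper leaves implicit.
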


\begin{corollary} {\sc Max Facial $\mathcal C$-Cycles} is solvable in
  $O(n)$ time for series-parallel graphs if any cycle in $\mathcal C$
  intersects at most two other cycles.
\end{corollary}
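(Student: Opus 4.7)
The plan is to obtain this as an immediate specialization of Theorem~\ref{th:series-paralle-fpt}. That theorem asserts a running time of $O(r^2 2^r \cdot n)$ for series-parallel graphs whenever every cycle in $\mathcal{C}$ intersects at most $r$ other cycles. Under the hypothesis of the corollary we have $r=2$, so $r^2 2^r = 4 \cdot 4 = 16$, a constant, and the overall running time collapses to $O(n)$.

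Concretely, I would just verify that the hypothesis of the corollary matches the hypothesis of Theorem~\ref{th:series-paralle-fpt}: the condition ``any cycle in $\mathcal{C}$ intersects at most two other cycles'' is exactly the theorem's condition with $r = 2$. Since the theorem already handles the general case by combining Lemma~\ref{lem:general-algo-snode} for S-nodes (which is linear in the number of children) with Lemma~\ref{lem:p-node-processing} for P-nodes (which is linear in the number of children once $r$ is fixed), no additional algorithmic ingredient is required; the only thing to do is to substitute $r=2$ into the bound and observe that the total work at each node is constant per child, summing to $O(n)$ across the SPQR-tree.

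I do not foresee any substantive obstacle: the corollary is a direct instantiation, and the only mild care needed is to note that series-parallel graphs have no R-nodes in their SPQR-trees, so Lemma~\ref{lem:general-algo-rnode} is not invoked, and the combination of the S- and P-node processing times suffices.
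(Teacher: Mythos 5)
Your proposal is correct and matches the paper's own (implicit) justification: the corollary is stated as a direct instantiation of Theorem~\ref{th:series-paralle-fpt} with $r=2$, making the factor $r^2 2^r$ a constant and the total running time $O(n)$. Your additional remark that series-parallel graphs have no R-nodes, so only the S- and P-node lemmas are needed, is exactly how the theorem itself is obtained in the paper.
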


In the following we show that {\sc Max Facial~$\mathcal C$-cycles} can
be solved in polynomial time for series-parallel graphs if any two
cycles in~$\mathcal C$ share at most two vertices. The next lemma
shows the special structure of relevant cycles in P-nodes of the
SPQR-tree in this case.
%
% This is probably unnecessary, since we can handle arbitrary cases for S-nodes!
%
% \begin{lemma}
%   There is at most one relevant cycle for each $S$-node.
% \end{lemma}

% \begin{proof}
%   Assume for a contradiction that there are two relevant cycles~$C$ and~$C'$ for an S-node~$\mu$.  Since~$\mu$ is an S-node, it follows that~$C$ and~$C'$ share the poles of~$\skel(\mu)$ and an additional cutvertex of~$\pert(\mu)$, which exists since~$\mu$ is an S-node.  This contradicts the
%   assumption that they share at most two vertices.
% \end{proof}

\begin{lemma}
  \label{lem:sp-pnode-properties}
  Let $G$ be a series-parallel graph and $\mathcal C$ be a set of
  cycles in $G$ such that any two cycles share at most two vertices.
  For each P-node $\mu$ any two relevant are either edge-disjoint
  in~$\skel(\mu)$ or they share the unique virtual edge
  of~$\skel(\mu)$ that corresponds to a Q-node child of~$\mu$, if any.
\end{lemma}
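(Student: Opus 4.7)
The plan is to exploit the simple dipole structure of a P-node skeleton together with the absence of R-nodes in the SPQR-tree of a series-parallel graph. First, I would record the elementary fact that $\skel(\mu)$ for a P-node $\mu$ consists of two poles $s,t$ joined by parallel virtual edges, so every cycle of $\skel(\mu)$ is a $2$-edge cycle formed by two such parallel edges. Hence the projection of any relevant cycle $C \in \mathcal C$ to $\skel(\mu)$ uses exactly two virtual edges, and any two relevant cycles $C_1, C_2$ either are edge-disjoint in $\skel(\mu)$ or share one or two virtual edges.

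The heart of the argument is to show that any virtual edge $e$ of $\skel(\mu)$ shared by two relevant cycles must correspond to a Q-node child. I would proceed by contradiction: let $\nu$ be the child of $\mu$ associated with $e$ and suppose $\nu$ is not a Q-node. Since $G$ is series-parallel, its SPQR-tree contains no R-node, so $\nu$ must be an S-node. Then $\skel(\nu)$ is a simple cycle of length at least three containing the parent virtual edge $s_\nu t_\nu$, and the complementary portion of this cycle forms a simple path $s_\nu, w_1, \dots, w_r, t_\nu$ with $r \geq 1$. Because $C_1$ and $C_2$ both use $e$ in $\skel(\mu)$, the restriction of each $C_i$ to $\pert(\nu)$ is an $s_\nu$--$t_\nu$ path, and the series structure of $\pert(\nu)$ forces every such path to visit all the vertices $w_1, \dots, w_r$. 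Consequently $V(C_1) \cap V(C_2)$ contains $\{s_\nu, t_\nu, w_1\}$, three distinct vertices, contradicting the hypothesis that any two cycles of $\mathcal C$ share at most two vertices. The same argument rules out $C_1$ and $C_2$ sharing two virtual edges of $\skel(\mu)$ unless both correspond to Q-nodes, in which case the two cycles would have to coincide.

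To finish, I would invoke the simplicity of $G$: a P-node of its SPQR-tree has at most one Q-node child, since two Q-node children would produce two parallel real edges between the poles of $\mu$, so the shared virtual edge is uniquely determined. The only delicate step I anticipate is the structural claim that any $s_\nu$--$t_\nu$ path in $\pert(\nu)$ traverses every internal cut vertex of the series decomposition of an S-node; this is, however, an immediate consequence of the SPQR-tree preliminaries and can be invoked directly.
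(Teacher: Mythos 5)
Your argument is correct and takes essentially the same route as the paper's proof: a shared virtual edge whose child is not a Q-node would be an S-node, forcing both cycles through a cutvertex of its pertinent graph and hence through three common vertices (the two poles plus that cutvertex), so the shared edge must correspond to the unique Q-node child. The only step you gloss over is excluding a P-node child $\nu$ (you pass directly from ``no R-nodes'' to ``$\nu$ is an S-node''), which is covered by the standard fact that no two adjacent nodes of an SPQR-tree are P-nodes; the paper handles this by observing that $\{u,v\}$ cannot be a separation pair of the expansion graph of $e$ since $\mu$ is a P-node.
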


We again use a bottom-up traversal traversal of the SPQR-tree of a
series-parallel graph to obtain the following theorem.  The S-nodes
are handled using Lemma~\ref{lem:general-algo-snode} and the
structural properties guaranteed by
Lemma~\ref{lem:sp-pnode-properties} allow for a simple handling of the
P-nodes.

\begin{theorem}\label{lem:sp-algo-atMostTwoVertices}
  {\sc Max Facial $\mathcal C$-Cycles} is solvable in $O(n)$ for
  series-parallel graphs if any two cycles in $\mathcal C$ share at
  most two vertices.
\end{theorem}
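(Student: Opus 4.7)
The plan is a bottom-up dynamic program on the SPQR-tree of $G$, filling the table $T[\mu,I]$ defined in Section~\ref{sec:general-algo}. Since $G$ is series-parallel only S-, P- and Q-nodes occur: Q-leaves are trivial and S-nodes are handled directly by Lemma~\ref{lem:general-algo-snode}. The whole proof therefore reduces to processing each P-node $\mu$ with $k$ children in $O(k)$ time, for the few interfaces that are ever queried.

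First I would prove a tight structural bound on the interface. For a P-node $\mu$, any two interface cycles share the parent virtual edge $e_p$ of $\skel(\mu)$; since $e_p$ is never a Q-node child of $\mu$, Lemma~\ref{lem:sp-pnode-properties} applied to these (relevant) cycles forces them to be edge-disjoint, a contradiction. Hence $|\mathcal I(\mu)|\le 1$ and $|I(\mu)|\le 2$. An analogous count shows that every non-root node $\mu_i$ satisfies $|\mathcal I(\mu_i)|\le 3$: each interface cycle of $\mu_i$ is a relevant cycle of its parent P-node through $\mu_i$'s virtual edge, and Lemma~\ref{lem:sp-pnode-properties} admits at most one non-Q relevant cycle, at most one Q-cycle, and at most one interface cycle of the parent through that edge. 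Consequently only $O(1)$ entries of any table are ever queried.

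To compute $T[\mu,I]$ at a P-node $\mu$ with children $\mu_1,\dots,\mu_k$ I invoke Lemma~\ref{lem:general-algo-pnode}. Lemma~\ref{lem:sp-pnode-properties} provides a very clean description of $\mathcal R(\mu)$: the ``non-Q cycles'' (those avoiding the unique Q-child virtual edge $e_Q$, if it exists) are pairwise edge-disjoint and hence form a matching on the non-$e_Q$ virtual edges, while the ``Q-cycles'' all share $e_Q$ and form a star. In the auxiliary multigraph $O$ of Lemma~\ref{lem:general-algo-pnode} every non-$e_Q$ vertex then has degree at most two, and the degree of $e_Q$ equals the number of Q-cycles we realize. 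Realizability thus reduces to selecting at most two Q-cycles and avoiding the unique possible length-three cycle $e_Q\,e_a\,e_b$ that arises when the partners $e_a,e_b$ of the two chosen Q-cycles are themselves matched by a non-Q cycle. After enumerating the $O(1)$ compatible choices for the Q-part and the realization of $I$, a single scan of the children, each queried at $O(1)$ table entries, adds all non-Q cycles greedily; this evaluates $T[\mu,I]$ in $O(k)$ time.

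The principal obstacle I expect is the bookkeeping of the interaction between the chosen Q-cycles, the non-Q cycles, and the required subset $I$, so that each choice is consistent with the parent-edge constraints encoded by $I$ and with the $O(1)$ values $T[\mu_i,C\cap\mathcal I(\mu_i)]$ feeding the sum of Lemma~\ref{lem:general-algo-pnode} from each child. Once this is settled, summing the $O(k)$ P-node cost and the $O(k)$ S-node cost over all nodes of the SPQR-tree yields the claimed $O(n)$ bound.
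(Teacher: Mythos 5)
Your overall route is the same as the paper's: a bottom-up dynamic program on the SPQR-tree in which S-nodes are handled by Lemma~\ref{lem:general-algo-snode} and each P-node is processed in $O(k)$ time by exploiting Lemma~\ref{lem:sp-pnode-properties} (pairwise edge-disjoint non-Q cycles realized greedily when their gain is positive, plus at most two cycles through the Q-child edge). There is, however, a genuine gap in your structural step. Lemma~\ref{lem:sp-pnode-properties} is proved only for two relevant cycles sharing a virtual edge that corresponds to a \emph{child} of $\mu$ (its proof inspects the expansion graph of that child), so you cannot apply it to two interface cycles sharing the \emph{parent} edge. For a non-root P-node the conclusion $|\mathcal I(\mu)|\le 1$ can be rescued, because the parent is an S-node and two interface cycles would then also pass through an internal skeleton vertex of that S-node, sharing at least three vertices; but when the parent is the root Q-node this breaks down: every cycle of $\mathcal C$ containing the root edge is an interface cycle of the root's child, there may be arbitrarily many of them, and two of them (one on each side of the root edge) can bound faces simultaneously. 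If you only ever consult interfaces of size at most one, you undercount already on the instance consisting of an edge $uv$ plus two internally disjoint $u$--$v$ paths with $\mathcal C$ the two cycles through $uv$ (optimum $2$, your table yields $1$). The paper closes exactly this case by treating the root child specially, choosing $I$ greedily as up to two positive-gain cycles whose projections contain the parent edge; your proof needs the same special handling. Similarly, your bound $|\mathcal I(\mu_i)|\le 3$ for non-root nodes is argued only via ``its parent P-node,'' whereas the parent may be an S-node; the bound still holds there, but by the three-shared-vertices argument above, not by Lemma~\ref{lem:sp-pnode-properties}.

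A smaller remark on your P-node feasibility analysis: the ``length-three cycle $e_Q\,e_a\,e_b$'' you guard against cannot occur. A non-Q cycle through $e_a$ and $e_b$ and a Q-cycle through $e_Q$ and $e_a$ would share the virtual edge $e_a$ without sharing the Q-child edge, contradicting Lemma~\ref{lem:sp-pnode-properties}; for the same reason non-Q cycles are entirely edge-disjoint from the realized Q-cycles, so the auxiliary graph is just a path of length at most two at $e_Q$ together with a matching. The only residual casework is the $O(1)$ interaction between $I$ (and, at the root, the parent edge) and $e_Q$, e.g.\ an interface cycle projecting onto $e_Q$ forbids realizing two further Q-cycles; once the root case above is fixed, this is constant-size bookkeeping and the $O(n)$ bound follows as you describe.
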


\section{Approximation Algorithms}
\label{sec:appr-algor}

In this section we derive constant-factor approximations for \mfcc in
series-parallel graphs and in biconnected planar graphs.  Again, we
use dynamic programming on the SPQR-tree.  This time, however, instead
of computing~$T[\mu,I]$, we compute an approximate
version~$\apxT[\mu,I]$ of it.  A table~$\apxT[\mu,\cdot]$ is a
\emph{$c$-approximation} of~$T[\mu,\cdot]$ if $1/c \cdot T[\mu,I] \le
\apxT[\mu,I] \le T[\mu,I]$, for all $I\in I(\mu)$.  For P-nodes, we
give an algorithm that approximates each entry within a factor of~2,
for R-nodes, we achieve an approximation ratio of~$(4+\eps)$ for
any~$\eps>0$.
In the following lemmas we deal separately with~S-, P-, and R-nodes.

\begin{lemma}
  \label{lem:apx-snode}
  Let~$\mu$ be an S-node with children~$\mu_1,\dots,\mu_k$.  Assume
  that~$\apxT[\mu_i,I]$ is a $c$-approximation of~$T[\mu_i,I]$
  for~$i=1,\dots,k$.  Then
  setting~$\apxT[\mu,I] = \sum_{i=1}^k \apxT[\mu_i,I]$ yields a
  $c$-approximation of~$T[\mu,I]$.
\end{lemma}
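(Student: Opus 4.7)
The plan is to reduce the approximation statement directly to the exact recurrence already established for S-nodes in Lemma~\ref{lem:general-algo-snode}, combined with the definition of $c$-approximation applied pointwise to each child. Concretely, I want to show that
\[
\tfrac{1}{c}\, T[\mu,I] \;\le\; \apxT[\mu,I] \;\le\; T[\mu,I]
\]
for every $I \in I(\mu)$, and the key observation is that $T[\mu,I]$ is itself a sum over the children of $\mu$, so the approximation error does not compound along an S-node.

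First, I would invoke Lemma~\ref{lem:general-algo-snode} to rewrite $T[\mu,I] = \sum_{i=1}^k T[\mu_i,I]$. Next, I would apply the hypothesis that each $\apxT[\mu_i,I]$ is a $c$-approximation of $T[\mu_i,I]$, giving the two inequalities $\tfrac{1}{c}T[\mu_i,I] \le \apxT[\mu_i,I]$ and $\apxT[\mu_i,I] \le T[\mu_i,I]$ for each $i$. Summing each of these inequalities over $i=1,\dots,k$ and using the definition $\apxT[\mu,I] = \sum_{i=1}^k \apxT[\mu_i,I]$ immediately yields the two desired bounds. This is a straightforward linearity argument and there is no real obstacle, since the S-node recurrence is additive and the factor $1/c$ pulls out of the sum.

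Finally, I would address the boundary case in which some $T[\mu_j,I] = -\infty$, which reflects the absence of any realizing embedding of $\pert(\mu_j)$. The $c$-approximation hypothesis then forces $\apxT[\mu_j,I] = -\infty$ as well (the upper bound $\apxT[\mu_j,I]\le T[\mu_j,I]$ alone suffices), and hence both $T[\mu,I]$ and $\apxT[\mu,I]$ are $-\infty$, so the required inequalities hold vacuously. Symmetrically, the case $|I|>2$ is covered by the extended convention that assigns $-\infty$ to every such entry at every node. This completes the proof.
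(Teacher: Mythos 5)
Your proposal is correct and follows exactly the paper's own argument: rewrite $T[\mu,I]$ via the additive S-node recurrence of Lemma~\ref{lem:general-algo-snode}, apply the per-child $c$-approximation bounds, and sum. Your extra remark about the $-\infty$ entries is a harmless (and valid) addition that the paper leaves implicit.
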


\begin{proof}
  To see this, observe that by Lemma~\ref{lem:general-algo-snode}, it
  is~$1/c \cdot T[\mu,I] = 1/c \cdot \sum_{i=1}^kT[\mu_i,I] \le
  \sum_{i=1}^k \apxT[\mu_i,I]$ and
  $\sum_{i=1}^k \apxT[\mu_i,I] \le \sum_{i=1}^k T[\mu_i,I] =
  T[\mu,I]$.
\end{proof}

Next we deal with a P-node~$\mu$ with children~$\mu_1,\dots,\mu_k$.
The algorithm works as follows.  Fix an set~$I \in I(\mu)$.  
We construct an auxiliary graph~$H$ as follows.  The vertices of~$H$ are the
children~$\mu_1,\dots,\mu_k$ of $\mu$.  Two vertices~$\mu_i$ and~$\mu_j$ are
adjacent in~$H$ if and only if there exists a cycle~$C \in \mathcal C$
that intersects~$\mu_i$ and~$\mu_j$ such
that~$\apxT[\mu_x, I \cap \mathcal I(\mu_x) \cup \{C\}] = \apxT[\mu_x,I \cap \mathcal I(\mu_x)]$ 
for~$x \in \{i,j\}$, i.e., according to the approximate table~$\apxT$
additionally realizing~$C$ in the interface of the children~$\mu_i$
and~$\mu_j$ does not cause additional costs.  If~$|I| = 2$, assume
that~$\mu_1$ and~$\mu_2$ are the two children intersected by the
cycles in~$I$.  Unless~$\mu_1$ and~$\mu_2$ are the only children
of~$\mu$, we remove the edge~$\mu_1\mu_2$ from~$H$ if it is there.
This reflects the fact that, due to the restrictions imposed by $I$, it is not possible to realize a corresponding cycle.
Now compute a maximum matching~$M$ in~$H$.  The matching~$M$ corresponds
to a set~$C_M \subseteq \mathcal R(\mu)$ of relevant cycles of~$\mu$ that are pairwise
edge-disjoint.  We
set~$\apxT[\mu,I] = \sum_{i=1}^k \apxT[\mu_i, (I \cup C_M) \cap \mathcal I(\mu_i) ] + |M|$. 

We claim that this gives a~$\max\{2,c\}$-approximation of~$T[\mu,\cdot]$ if the input is a $c$-approximation for~$T[\mu_i,\cdot]$ for~$i=1,\dots,k$.

\begin{lemma}
  \label{lem:apx-pnode}
  Let~$\mu$ be a P-node an let $\apxT[\mu,\cdot]$ denote the table
  computed in the above fashion.  Then~$\apxT[\mu,\cdot]$ is a
  $\max\{2,c\}$-approximation of~$T[\mu,\cdot]$ if~$\apxT[\mu_i,\cdot]$ is a $c$-approximation
  of~$T[\mu_i,\cdot]$.
\end{lemma}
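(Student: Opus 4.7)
The plan is to prove the two inequalities $\apxT[\mu,I]\le T[\mu,I]$ and $T[\mu,I]\le\max\{2,c\}\,\apxT[\mu,I]$ separately. For the lower bound I would instantiate Lemma~\ref{lem:general-algo-pnode} with $C:=I\cup C_M$. Two checks are needed: that $|C\cap\mathcal I(\mu_i)|\le 2$ for every child (which holds because the two elements of $I$ lie in distinct children and the matching $M$ contributes at most one cycle per child), and that the auxiliary multigraph associated with $C$ is a disjoint union of paths, so that $f(C)=|M|$; the latter holds because $M$ is a matching and the removal of the edge $\mu_1\mu_2$ from $H$ in the non-degenerate case prevents a conflict with the implicit interface edge. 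Combining this with $\apxT[\mu_i,\cdot]\le T[\mu_i,\cdot]$ and the free property of $M$, so that every child term reduces to $A_i:=\apxT[\mu_i, I\cap\mathcal I(\mu_i)]$, then yields the claim.

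For the approximation guarantee, let $C^*$ attain $T[\mu,I]$ and set $E^*:=C^*\setminus I$, $A:=\sum_i A_i$, and $\delta_i:=A_i-\apxT[\mu_i, C^*\cap \mathcal I(\mu_i)]\ge 0$ (nonnegativity follows from the monotonicity of the child tables in the interface), with $\delta:=\sum_i \delta_i$. By construction $\apxT[\mu,I]=A+|M|$, and the $c$-approximation of the children gives $\sum_i T[\mu_i, C^*\cap\mathcal I(\mu_i)]\le c\,\sum_i\apxT[\mu_i, C^*\cap\mathcal I(\mu_i)]=c(A-\delta)$.

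The heart of the argument is bounding $|E^*|$. Partition $E^*=E^*_F\cup E^*_{NF}$, where $E^*_F$ consists of cycles that are free at both endpoints. Lemma~\ref{lem:general-algo-pnode} implies that $E^*$ forms a disjoint union of paths on the children, so $E^*_F$ has maximum degree $2$ and contains a matching of size at least $|E^*_F|/2$ inside $H$; maximality of $M$ then yields $|E^*_F|\le 2|M|$. For each non-free $C\in E^*_{NF}$, some endpoint $\mu_i$ satisfies $\apxT[\mu_i, I\cap\mathcal I(\mu_i)\cup\{C\}]\le A_i-1$ by integrality, and monotonicity then gives $\delta_i\ge 1$; since each child is incident to at most two cycles of $E^*$, a simple charging yields $|E^*_{NF}|\le 2\delta$. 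Summing,
\[
T[\mu,I]\;\le\;c(A-\delta)+|E^*|\;\le\;cA+(2-c)\delta+2|M|,
\]
and a final case split finishes the proof: if $c\le 2$ then $\delta\le A$ gives $T[\mu,I]\le 2A+2|M|=2\apxT[\mu,I]$, whereas if $c\ge 2$ then $(2-c)\delta\le 0$ gives $T[\mu,I]\le cA+2|M|\le c(A+|M|)=c\apxT[\mu,I]$.

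The main obstacle will be the charging bound $|E^*_{NF}|\le 2\delta$: this is the single place where integrality and monotonicity of $\apxT$, together with the degree-at-most-two structure of $E^*$, have to be combined, and it is what makes the unified $\max\{2,c\}$ split clean in both regimes.
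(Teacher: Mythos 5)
Your proposal is correct and follows essentially the same route as the paper's proof: the lower bound via Lemma~\ref{lem:general-algo-pnode} applied to $I\cup C_M$, and the upper bound by exploiting the path structure of the optimally realized cycles on the children together with a charge of the non-$H$ cycles against unit drops in the child tables. The only difference is bookkeeping---you partition the optimal cycles into free/non-free and track an explicit deficit $\delta$ with a clean $(2-c)\delta$ case split, where the paper first $2$-edge-colors the optimal cycles to extract a matching $M'$ and then splits $M'$ by membership in $E(H)$---and both arguments rely on the same implicit monotonicity of $\apxT[\mu_i,\cdot]$ in the interface.
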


\begin{proof}
  We first show that~$\apxT[\mu,\cdot] \le T[\mu,\cdot]$.  To this
  end, it suffices to show that, for any $I \in I(\mu)$, there exists
  an embedding of~$\pert(\mu)$ that realizes~$I$
  and has~$\apxT[\mu, I ]$ realized cycles from~$\cal C$.  Consider the multigraph with
  vertex set~$\{\mu_0,\mu_1,\dots,\mu_k\}$ and edge
  set~$C_M \cup I$.  This graph has maximum degree~2 and, due to our
  special treatment, unless~$k=2$, none of its connected components is
  a cycle.  We can thus always complete this graph into a cycle, which
  defines a circular order of~$\mu_0,\dots,\mu_k$, and hence an
  embedding of~$\skel(\mu)$.  In this embedding, all the cycles
  in~$C_M \cup I$ project to facial cycles.  Realizing all these
  cycles
  yields~$\apxT[\mu,I] = \sum_{i=1}^k \apxT[\mu_i, (I \cup C_M) \cap \mathcal I(\mu_i)] + |M| \leq \sum_{i=1}^k T[\mu_i, (I \cup C_M) \cap \mathcal I(\mu_i)] + |M|$ realized cycles.  
By the definition of the $T[\mu_i,\cdot]$ we get embeddings for the $\pert(\mu_i)$ with a corresponding number of cycles in $\cal C$ and by combining them according to the  embedding of $\skel(\mu)$ chosen above we obtain an embedding of $\pert(\mu)$ that realizes $I$ and has at least $\apxT[\mu, I]$ facial cycles in $\cal C$.
Hence~$\apxT[\mu, I] \le T[\mu, I]$.

Conversely, consider~$T[\mu,I]$ and a corresponding embedding
of~$\skel(\mu)$.  Denote by~$C_\opt$ the set of cycles realized by an
optimal solution that project to facial cycles of~$\skel(\mu)$.  We
consider two cycles in~$C_\opt$ as adjacent if they intersect the same
child of~$\mu$.  Clearly, each child~$\mu_i$ is intersected by at most
two cycles in~$C_\opt$ and, moreover, the two faces of~$\skel(\mu)$
incident to the parent edge are not realized.  Hence the corresponding
graph is a collection of paths.  It is hence possible to edge-color it
with two colors.  Let~$C_\opt'$ be the cycles in the larger color
class.  We have~$|C_\opt'| \ge |C_\opt|/2$ and no two distinct cycles
in~$C_\opt'$ intersect the same child~$\mu_i$ of~$\mu$, i.e.,
interpreting the cycles in~$C_\opt'$ as edges on the vertex
set~$\{\mu_1,\dots,\mu_k\}$ yields a matching~$M'$.  We would like to
argue that our matching~$M$ in the auxiliary graph $H$ is larger
than~$M'$, and hence we realize at least half of the cycles of the
optimum.  However, this argument is not valid, since $M'$ may contain
edges that are not present in $H$ due to approximation errors in the
$\apxT[\mu_i,\cdot]$.  We will show that the contribution of these
edges is irrelevant and hence the intuition about comparing the
matching sizes indeed applies.

  Let~$M_1' = M' \setminus E(H)$ and~$M_2' = M' \cap E(H)$.
  Let~$J = \{1,\dots,k\}$ and
  let~$J_1 = \{ i \in J \mid \exists C \in M_1' $ that
  intersects~$\mu_i \}$ be the indices of children that are
  intersected by a cycle in~$M_1'$.  
The set~$J_2 = J \setminus J_1$
  contains the remaining indices.

  Clearly, we
  have~$T[\mu,I] = \sum_{i=1}^k T[\mu_i, (I \cup C_\opt) \cap \mathcal I(\mu_i)] +
  |C_\opt|$ according to Lemma~\ref{lem:general-algo-pnode}.
  Realizing instead of~$C_{\opt}$ just the set of cycles~$C_{M'}=C'_{\opt}$
  corresponding to~$M'$ drops at most~$|C_\opt|/2$ facial cycles in $\cal C$, while
  imposing weaker interface constraints on the children. We therefore have
  \begin{equation*}
    \label{eq:1}
    T[\mu,I] = \sum_{i=1}^k T[\mu_i, (I \cup
    C_\opt) \cap \mathcal I(\mu_i)] + |C_\opt| \le \sum_{i=1}^k T[\mu_i, (I \cup
    C_{M'}) \cap \mathcal I(\mu_i)] + 2|M'|
  \end{equation*}
  We now use the fact that the~$\apxT[\mu_i,\cdot]$ are
  a~$c$-approximation of the~$T[\mu_i,\cdot]$, and hence also a $c'$-approximation for $c' = \max\{c,2\}$, and we also separate the sum by the
  index set~$J_1$ and~$J_2$ and consider the two matchings~$M_1'$
  and~$M_2'$ separately.
  \begin{align}
    \label{eq:J1J2}
      \sum_{i=1}^k T[\mu_i, (I \cup C_{M'}) \cap \mathcal I(\mu_i)]& + 2|M'|  \le 
%\phantom{~~~~~~~~~~~~~~~~~~~~~~~~~~~~~~~~~~~~~~~~~~~~}
 \nonumber  
  c' \cdot \sum_{i \in J_1} \apxT[\mu_i, (C_{M_1'} \cup I) \cap \mathcal I(\mu_i)] + 2 |M_1'|\\
& + c' \cdot  \sum_{i \in J_2} \apxT[\mu_i, (C_{M_2'} \cup I) \cap \mathcal I(\mu_i)] + 2|M_2'|.
  \end{align}
  Observe that the indices of the children
  intersected by cycles that form a matching~$M_2$ in~$H$ are all contained in~$J_2$.  By the
  definition of~$H$, we thus
  have~$\apxT[\mu_i, (C_{M_2'} \cup I) \cap \mathcal I(\mu_i)] = \apxT[\mu_i, I \cap \mathcal I(\mu_i)]$, for $i \in J_2$.
  
  For the first term, observe that, for each
  edge~$\mu_i\mu_j \in M_1'$, we
  have~$\apxT[\mu_x, (M_1' \cup I) \cap \mathcal I(\mu_x)] \le \apxT[\mu_x, I \cap \mathcal I(\mu_x)] - 1$ for at
  least one~$x \in \{i,j\}$.  Otherwise the edge would be in~$H$, and
  hence in~$M_2'$.  Let~$J_1' \subseteq J_1$ denote the set of indices
  where this happens and let~$J_1'' = J_1 \setminus J_1'$.  Observe
  that~$|J_1'| \ge |M_1'|$.  We thus have

  \begin{align*}
 &c' \cdot \sum_{i \in J_1} \apxT[\mu_i, (C_{M_1'} \cup I) \cap \mathcal I(\mu_i)] + 2|M_1'|  \\
 & = c' \cdot \sum_{i \in J_1'} \apxT[\mu_i, (C_{M_1'} \cup I) \cap \mathcal I(\mu_i)]
        + c' \cdot \sum_{i \in J_1''} \apxT[\mu_i, (C_{M_1'} \cup
        I) \cap \mathcal I(\mu_i)] + 2|M_1'| \nonumber\\
 & \le c' \cdot \sum_{i \in J_1'} (\apxT[\mu_i, I \cap \mathcal I(\mu_i)] - 1) + c' \cdot \sum_{i \in J_1''} \apxT[\mu_i, I \cap \mathcal I(\mu_i)] + 2 |M_1'| \nonumber\\ 
 & \le  c' \cdot \left( \sum_{i \in J_1} \apxT[\mu_i, I \cap \mathcal I(\mu_i)] - |J_1'|\right) + 2|M_1'| \le  c' \cdot \sum_{i \in J_1} \apxT[\mu_i, I \cap \mathcal I(\mu_i)] \nonumber,
 \end{align*}
  where the last step uses the fact that~$c' \ge 2$.  Plugging this
  information into Eq.~\ref{eq:J1J2}, yields the following.
  \begin{align*}
    & c' \cdot \sum_{i \in J_1} \apxT[\mu_i, (C_{M_1'} \cup I) \cap \mathcal I(\mu_i)] + 2
      |M_1'| + c' \cdot \sum_{i \in J_2} \apxT[\mu_i,
      (C_{M_2'} \cup I) \cap \mathcal I(\mu_i)] + 2 |M_2'| \\
    & \le c' \cdot \sum_{i=1}^k \apxT[\mu_i, I \cap \mathcal I(\mu_i)] + 2|M_2'|
    \le c' \cdot \sum_{i=1}^k \apxT[\mu_i, I \cap \mathcal I(\mu_i)] + 2|M| \nonumber \\
    & \le c' \cdot \left(\sum_{i=1}^k \apxT[\mu_i, I \cap \mathcal I(\mu_i)] + |M|\right) \nonumber
  \end{align*}

  Here the last two steps use the fact that~$M \subseteq E(H)$ is a
  maximum matching, and hence larger than~$M_2'$, and that~$c' \ge 2$,
  respectively.
\end{proof}

We note that the bottleneck for computing~$T[\mu,I]$ is finding a
maximum matching in a graph with~$O(|\skel(\mu)|)$ vertices and~$O(|\mathcal C|)$
edges.  Hence the running time for one step
is~$O(|\skel(\mu)| + \sqrt{|\skel(\mu)|}\cdot |\mathcal C|)$.  Since~$|I(\mu)| \leq |\mathcal C|^2$, the running time for
processing a single P-node~$\mu$
is~$O(|\skel(\mu)| |\mathcal C|^2 + \sqrt{|\skel(\mu)|} \cdot |\mathcal
C|^3)$.  The total time for processing all P-nodes then
is~$O(n |\mathcal C|^2 + \sqrt{n} |\mathcal C|^3)$.

\begin{theorem}
  There is a 2-approximation algorithm with running
  time~$O(n |\mathcal C|^2 + \sqrt{n} |\mathcal C|^3)$ for \mfcc in
  series-parallel graphs.
\end{theorem}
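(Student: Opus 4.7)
The plan is to combine Lemma~\ref{lem:apx-snode} and Lemma~\ref{lem:apx-pnode} by a bottom-up traversal of the SPQR-tree $\mathcal T$ of the series-parallel input graph. The key structural observation that makes these two lemmas sufficient is that the SPQR-tree of a series-parallel graph contains no R-nodes; only Q-, S-, and P-nodes occur. Hence propagating approximate tables from the leaves up to the root never requires an R-node computation, and the missing R-node analysis from Section~\ref{sec:appr-algor} plays no role here.

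First I would root $\mathcal T$ at an arbitrary Q-node $\rho$ with unique child $\phi$ and process the nodes in post-order. For each leaf Q-node $\mu$ and each $I \in I(\mu)$, set $\apxT[\mu, I] = 0 = T[\mu, I]$; this is trivially a $1$-approximation. At an internal S-node, invoking Lemma~\ref{lem:apx-snode} preserves the current approximation ratio. At an internal P-node whose children have $c$-approximate tables, Lemma~\ref{lem:apx-pnode} produces a $\max\{2, c\}$-approximation; hence, by induction on $\mathcal T$, every computed table is a $2$-approximation of its exact counterpart. The algorithm finally outputs $\max_{I \in I(\phi)} (|I| + \apxT[\phi, I])$, which is at least $\tfrac{1}{2}\, \max_{I \in I(\phi)} (|I| + T[\phi, I])$, i.e., at least half of the optimum, by the discussion preceding Lemma~\ref{lem:general-algo-snode}.

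For the running time, each S-node with $k$ children is processed in $O(k |\mathcal C|^2)$ time, since $|I(\mu)| = O(|\mathcal C|^2)$ and each entry is a sum over $k$ child entries. Each P-node $\mu$ needs, per entry of $\apxT[\mu, \cdot]$, a maximum matching in an auxiliary graph with $O(|\skel(\mu)|)$ vertices and $O(|\mathcal C|)$ edges, which via Hopcroft--Karp costs $O(|\skel(\mu)| + \sqrt{|\skel(\mu)|} \cdot |\mathcal C|)$, giving a per-P-node total of $O(|\skel(\mu)| |\mathcal C|^2 + \sqrt{|\skel(\mu)|} \cdot |\mathcal C|^3)$. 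Since $\sum_{\mu} |\skel(\mu)| = O(n)$ over all SPQR-tree nodes, the contributions aggregate to the claimed $O(n |\mathcal C|^2 + \sqrt{n} |\mathcal C|^3)$ bound.

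The main intellectual obstacle has already been absorbed into Lemma~\ref{lem:apx-pnode}, whose matching-plus-edge-coloring argument is the crux of the $2$-approximation and the only place where the ratio is worsened. The present theorem is essentially a composition result, so the only work that remains is to confirm that the S- and P-node approximation ratios compose cleanly through the post-order traversal and to tally up the running-time contributions of all tree nodes; both steps are routine given the two lemmas above.
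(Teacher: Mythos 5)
Your proposal is correct and matches the paper's own argument: the paper likewise obtains the theorem by composing Lemma~\ref{lem:apx-snode} and Lemma~\ref{lem:apx-pnode} bottom-up over the SPQR-tree (which has no R-nodes for series-parallel graphs), with the identical running-time accounting via Hopcroft--Karp matchings and $\sum_\mu |\skel(\mu)| = O(n)$. No gaps; your explicit handling of the root Q-node and the final $\max_{I}(|I|+\apxT[\phi,I])$ step is if anything slightly more careful than the paper's terse presentation.
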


Next we deal with R-nodes.  Let $\mu$ be an R-node with children
$\mu_1,\dots,\mu_k$ and let $J = \{1,\dots,k\}$.  For each face $f$ of
$\skel(\mu)$ let $J_f$ denote the indices of the children $\mu_i$
whose corresponding virtual edge in $\skel(\mu)$ is incident to
$f$.

Fix~$I \in I(\mu)$.  We propose the following algorithm for
computing~$\apxT[\mu,I]$.  Consider the subgraph $H$ of the dual
of~$\skel(\mu)$ induced by those vertices $v$ corresponding to a face
$f$ not incident to the parent edge of $\skel(\mu)$ and such that
there exists a cycle $C_v \in \mathcal C$ that projects to the
boundary of $f$ and such that $\apxT[\mu_i,(\{C_v\} \cup I) \cap \mathcal
I(\mu)] = \apxT[\mu_i, I \cap \mathcal I(\mu)]$, i.e., requiring that
$C_v$ is realized in $\mu_i$ does not change the approximate number of
faces realized by $\pert(\mu_i)$.

Now we compute a~$(1+\eps/4)$-approximation of a maximum independent
set of $H$, which can be done in time polynomial in $|\skel(\mu)|$
(and exponential in~$(1/\eps)$)~\cite{b-aanpppg-94}.  Let~$X$ denote
this independent set, and let~$C_X = \{ C_v \mid v \in X\}$ be a set
of corresponding cycles in $\mathcal C$.  We set~$\apxT[\mu,I] =
\sum_{i=1}^k \apxT[\mu_i, (I \cup X) \cap \mathcal I(\mu_i)] + |X|$,
and claim that in this fashion~$\apxT[\mu,\cdot]$ is
a~$\max\{c,(4+\eps)\}$-approximation provided
that~$\apxT[\mu_i,\cdot]$ is a~$c$-approximation of $T[\mu_i,\cdot]$.
The proof 4-colors the facial cycles~$C_\opt$ that are realized by an
optimal solution and considers the largest color class, which is an
independent set of size at least~$|C_\opt|/4$.  The proof is similar
to that of Lemma~\ref{lem:apx-pnode}.  
\begin{lemma}\label{apx:r-node}
  Let~$\apxT[\mu,\cdot]$ denote the table computed in the above
  fashion.  Then~$\apxT[\mu,\cdot]$ is a
  $\max\{c,(4+\eps)\}$-approximation of~$T[\mu,\cdot]$ provided
  that~$\apxT[\mu_i,\dots]$ is a $c$-approximation
  of~$T[\mu_i,\cdot]$.
\end{lemma}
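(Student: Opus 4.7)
The plan is to follow the template of Lemma~\ref{lem:apx-pnode}, establishing separately the feasibility bound $\apxT[\mu,I]\le T[\mu,I]$ and the approximation bound $T[\mu,I]\le c'\cdot\apxT[\mu,I]$ with $c'=\max\{c,4+\eps\}$. The key combinatorial change compared to the P-node case is that the auxiliary structure is now a planar graph (a subgraph of the dual of $\skel(\mu)$) rather than a path or cycle, so the appropriate rounding loss of~$4$ will come from $4$-colorability of planar graphs rather than from $2$-colorability.

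For the feasibility direction, I would exploit that $\skel(\mu)$ has a unique embedding up to reflection and that $X$ is an independent set in the dual, so the faces $\{f_v:v\in X\}$ are pairwise edge-disjoint in $\skel(\mu)$ and each child $\mu_i$ is incident to at most one such face. By the defining property of~$V(H)$, for every child $\mu_i$ incident to some $f_v$ with $v\in X$ one has $\apxT[\mu_i,(I\cup X)\cap\mathcal I(\mu_i)]=\apxT[\mu_i,I\cap\mathcal I(\mu_i)]$, and the same equality holds trivially for children incident to no face of~$X$. Using the inductive hypothesis $\apxT[\mu_i,\cdot]\le T[\mu_i,\cdot]$ I obtain embeddings of the $\pert(\mu_i)$ realizing the required interfaces, and gluing them along the fixed embedding of $\skel(\mu)$ yields an embedding of $\pert(\mu)$ that realizes~$I$ and has at least $\apxT[\mu,I]$ facial cycles from~$\mathcal C$.

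For the approximation direction, fix an embedding achieving $T[\mu,I]$ and let $C_\opt\subseteq\mathcal C\setminus I$ be the cycles realized as faces of $\skel(\mu)$ not incident to the parent edge, so that
\[ T[\mu,I]=\sum_{i=1}^k T[\mu_i,(I\cup C_\opt)\cap\mathcal I(\mu_i)]+|C_\opt|. \]
The subgraph of the dual of $\skel(\mu)$ induced by the dual vertices of $C_\opt$ is planar, hence $4$-colorable; taking the largest color class gives an independent set $C'_\opt$ with $|C'_\opt|\ge|C_\opt|/4$. Partition $C'_\opt=M_1'\cup M_2'$ according to whether the corresponding dual vertex lies outside or inside~$V(H)$. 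Since $M_2'$ is an independent set of~$H$, the $(1+\eps/4)$-approximate MIS~$X$ satisfies $|X|\ge|M_2'|/(1+\eps/4)$, equivalently $4|M_2'|\le c'|X|$.

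Combining these ingredients in the style of Lemma~\ref{lem:apx-pnode}, one applies the inductive $c$-approximation to pass from $T[\mu_i,\cdot]$ to $c'\apxT[\mu_i,\cdot]$, uses independence of $C'_\opt$ to ensure each $\mu_i$ is incident to at most one face of $C'_\opt$, and then splits the per-child contribution three ways: for each $v\in M_1'$, the fact that $v\notin V(H)$ yields $\apxT[\mu_i,(I\cup C'_\opt)\cap\mathcal I(\mu_i)]\le\apxT[\mu_i,I\cap\mathcal I(\mu_i)]-1$ for at least one incident~$i$, saving $|M_1'|$ in total; for each $v\in M_2'\subseteq V(H)$ the defining equation of~$H$ makes the same table entry equal to $\apxT[\mu_i,I\cap\mathcal I(\mu_i)]$; and for children incident to no face of $C'_\opt$ the entry is unchanged. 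Using $c'\ge 4$ to absorb the $|M_1'|$ savings against the corresponding $4|M_1'|$ contribution to the cycle count, and the bound $4|M_2'|\le c'|X|$ to dominate the remaining term, one arrives at $T[\mu,I]\le c'\bigl(\sum_i\apxT[\mu_i,I\cap\mathcal I(\mu_i)]+|X|\bigr)=c'\apxT[\mu,I]$, where the final identity is the one established in the feasibility direction. The main obstacle is exactly this two-source approximation accounting: as in the P-node proof, one must verify that the cycles of $C_\opt$ with dual vertices excluded from~$H$ (because of approximation errors in the children) never cost more than the slack provided by the factor~$c'$.
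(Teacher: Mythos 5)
Your proposal follows essentially the same route as the paper's proof: the same feasibility argument via independence of $X$ in the dual and the defining property of $V(H)$, the same use of the four-color theorem to extract an independent color class $C'$ with $|C'|\ge|C_\opt|/4$, the same partition of $C'$ by membership of dual vertices in $V(H)$ with the per-child ``$-1$ saving'' absorbing the cycles excluded from $H$, and the same comparison of the in-$H$ part against the $(1+\eps/4)$-approximate independent set $X$. The accounting is sound (your choice of $c'=\max\{c,4+\eps\}$ up front versus the paper's $\max\{c,4\}$ followed by a final $\max\{c',4+\eps\}$ is only a bookkeeping difference), so no further comparison is needed.
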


Overall, we obtain the following theorem.

\begin{theorem}
  \mfcc for biconnected planar graphs admits an efficient
  $(4+\eps)$-approximation algorithm for any~$\eps > 0$.  
\end{theorem}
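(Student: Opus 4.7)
The plan is to combine Lemmas~\ref{lem:apx-snode}, \ref{lem:apx-pnode}, and \ref{apx:r-node} via a bottom-up dynamic program on the SPQR-tree~$\cal T$ of the biconnected planar input graph~$G$. Root~$\cal T$ at an arbitrary Q-node~$\rho$ with unique child~$\phi$; the value of an optimal solution can be read off as $\max_{I \in I(\phi)} (|I| + T[\phi,I])$, so it suffices to produce an approximation $\apxT[\phi,\cdot]$ of $T[\phi,\cdot]$ with ratio $4+\eps$.

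First I would initialize~$\apxT[\mu,\cdot] = T[\mu,\cdot]$ exactly at each leaf Q-node, which is trivial since a Q-node consists of a single edge. Then I would traverse~$\cal T$ bottom-up, computing~$\apxT[\mu,\cdot]$ at each internal node~$\mu$ from the already-computed tables of its children by selecting the appropriate procedure based on the node type: Lemma~\ref{lem:apx-snode} for S-nodes (which simply sums entries and preserves the approximation ratio), the matching-based procedure of Lemma~\ref{lem:apx-pnode} for P-nodes (which yields a $\max\{2,c\}$-approximation), and the independent-set-based procedure of Lemma~\ref{apx:r-node} for R-nodes (which yields a $\max\{c,4+\eps\}$-approximation).

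The key step is the inductive analysis of the approximation ratio. Let $c^\star = 4+\eps$. I claim that after processing any node~$\mu$, the table $\apxT[\mu,\cdot]$ is a $c^\star$-approximation of $T[\mu,\cdot]$. The base case is clear. For the inductive step, assume all children of $\mu$ have been processed within ratio $c^\star$. If $\mu$ is an S-node, Lemma~\ref{lem:apx-snode} preserves the ratio $c^\star$. If $\mu$ is a P-node, Lemma~\ref{lem:apx-pnode} yields ratio $\max\{2, c^\star\} = c^\star$ (since $c^\star \geq 2$). If $\mu$ is an R-node, Lemma~\ref{apx:r-node} yields ratio $\max\{c^\star, 4+\eps\} = c^\star$. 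Thus in all cases the ratio $c^\star = 4+\eps$ is preserved all the way up to $\phi$.

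For the running time, each node is processed in polynomial time: S-nodes in linear time in the number of children, P-nodes in polynomial time via maximum matching, and R-nodes in time polynomial in $|\skel(\mu)|$ and exponential only in $1/\eps$ using the polynomial-time approximation scheme for maximum independent set on planar graphs of~\cite{b-aanpppg-94}. Since the total size $\sum_\mu |\skel(\mu)|$ is linear in $n$ and $|I(\mu)| = O(|\mathcal C|^2)$ at every node, the overall running time is polynomial in the size of $G$ and $\mathcal C$ for every fixed $\eps > 0$. The main subtlety is simply to verify that the $c^\star$-approximation guarantee absorbs both the P-node factor of $2$ and the R-node factor of $4+\eps$ without deterioration across levels; once one observes that $\max$ is the correct composition operator in each lemma and that $c^\star \geq 4+\eps \geq 2$, the induction goes through cleanly.
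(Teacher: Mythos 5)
Your proposal is correct and follows essentially the same route as the paper: the theorem is obtained exactly by the bottom-up SPQR-tree dynamic program that initializes Q-node leaves exactly and composes Lemmas~\ref{lem:apx-snode}, \ref{lem:apx-pnode}, and \ref{apx:r-node}, with the ratios combining via $\max\{2,4+\eps\}=4+\eps$ and the final answer read off at the child of the root Q-node. Your inductive bookkeeping of the ratio $c^\star=4+\eps$ and the polynomial running-time accounting match the paper's (implicit) argument.
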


\section{Conclusion}
\label{sec:conclusion}

% In this paper we considered the problem \mfcc of constructing a combinatorial embedding of a planar biconnected graph in which the maximum number of cycles from a given set $\cal C$ appear as facial cycles in the embedding. 
% Several graph drawing algorithms assume that a combinatorial embedding is provided as an input and produce drawings preserving such an embedding. Hence, whenever certain cycles capture important aspects of the network, e.g. ring-backbones in computer networks, or are encompassed with a user-defined semantics, a solution for \mfcc would provide a suitable input embedding for such algorithms emphasizing as many such structures as possible.

In this paper we showed NP-hardness of \mfcc under restrictive
conditions, showed that even stronger conditions make the problem
tractable and gave constant-factor approximations for series-parallel
and biconnected planar graphs with approximation guarantees of $2$ and
$4+\eps$, respectively.  

We remark that it is possible to adapt all our algorithmic results to
the weighted case where each facial cycle has a positive weight and
one seeks a planar embedding that maximizes the total weight of the
facial cycles in $\mathcal C$.  We leave open the question whether
similar algorithmic results can be obtained for arbitrary, not
necessarily biconnected, planar graphs.

\remove{
\section{Approximate Embedding Extension}
\label{sec:appr-embedd-extens}

A \emph{partially embedded graph}, \peg for short, is a
triple~$(G,H,\mathcal H)$ where~$G$ is a graph, $H$ is a subgraph of
$G$ and $\mathcal H$ is a planar embedding of $H$.  A \peg is planar
if~$G$ admits a planar embedding whose restriction to $H$ coincides
with $\mathcal H$.  We call such an embedding an \emph{embedding
  extension}.  For many \pegs such an embedding extension does not
exist.  In this case, we are interested in determining a planar
embedding of~$G$ that preserves the embedding $\mathcal H$ as much as
possible.

We propose to measure the amount to which the embedding $\mathcal H$
is preserved in a planar embedding $\mathcal G$ of $G$ by the number
of facial cycles of~$\mathcal H$ that are facial cycles in the
restriction of $\mathcal G$ to $H$.  Note that if all facial cycles
are preserved, then the restriction of~$\mathcal G$ may differ from
$\mathcal H$ at most by a flip, and hence the \peg is planar.

We study the complexity of the problem {\sc Maximum Preserved Faces}
to find for a given \peg a planar embedding of~$G$ that maximizes the
number of preserved facial cycles in $H$.

\begin{theorem}
  {\sc Maximum Preserved Faces} is NP-complete.

  The number of preserved faces cannot be efficiently approximated
  within a factor of~$n^{1-\eps}$ for any $\eps > 0$.
\end{theorem}

\begin{proof}
\end{proof}

\begin{theorem}
  If~$H$ is biconnected, {\sc Maximum Preserved Faces} admits a
  2-approximation with running time~$O(...)$.
\end{theorem}
}

\clearpage
\bibliographystyle{splncs03}
\bibliography{maxfaces}

\clearpage
\appendix

\section{Connectivity and SPQR-trees}\label{apx:SPQR}

A graph $G$ is \emph{connected} if there is a path between any two vertices.
A \emph{cutvertex} is a vertex whose removal disconnects the graph.
A \emph{separating pair} is a pair of vertices $\{u,v\}$ whose removal disconnects the graph.
A connected graph is \emph{$2$-connected} if it does not have a cutvertex and a $2$-connected graph is $3$-connected if it does not have a separating pair.
A $2$-connected plane graph $G$ is {\em internally $3$-connected} if $G$ can be extended to a $3$-connected planar graph by adding a vertex in the outer face and joining it to all the vertices incident to the outer face.

We consider $uv$-graphs with two special \emph{pole} vertices $u$ and $v$, which can be constructed in a fashion very similar to series-parallel graphs.  Namely, an edge $(u,v)$ is an $uv$-graph with poles $u$ and $v$.  Now let $G_i$ be an $uv$-graph with poles $u_i,v_i$ for $i=1,\dots,k$ and let
$H$ be a planar graph with two designated vertices $u$ and $v$ and $k+1$ edges $uv, e_1,\dots,e_k$.
We call $H$ the \emph{skeleton} of the composition and its edges are called \emph{virtual edges}; the edge $uv$ is the \emph{parent edge} and $u$ and $v$ are the poles of the skeleton $H$.
To compose the $G_i$ into an $uv$-graph with poles $u$ and $v$, we remove the edge $uv$ and replace each $e_i$ by $G_i$ for $i=1,\dots,k$ by removing $e_i$ and identifying the poles of $G_i$ with the endpoints of $e_i$.  In fact, we only allow three types of compositions: in a \emph{series composition} the
skeleton $H$ is a cycle of length~$k+1$, in a parallel composition $H$ consists of two vertices connected by $k+1$ parallel edge, and in a \emph{rigid composition} $H$ is 3-connected.

It is known that for every $2$-connected graph $G$ with an edge $uv$ the graph $G-st$ is an $uv$-graph with poles $u$ and $v$.  Much in the same way as series-parallel graphs, the $uv$-graph $G \setminus uv$ gives rise to a (de-)composition tree~$\mathcal T$ describing how it can be obtained from single edges.
The nodes of $\mathcal T$ corresponding to edges, series, parallel, and rigid compositions of the graph are \emph{Q-, S-, P-, and R-nodes}, respectively.  To obtain a composition tree for $G$, we add an additional root Q-node representing the edge $uv$.  To fully describe the composition, we
associate with each node~$\mu$ its skeleton denoted by $\skel(\mu)$.
%For a Q-node~$\mu$, the skeleton consists of the two endpoints of the edge represented by~$\mu$ and one real and one virtual edge between them representing the rest of the graph.
For a node~$\mu$ of $\mathcal T$, the \emph{pertinent graph} $\pert(\mu)$ is the subgraph represented by the subtree with root~$\mu$.
Similarly, for a virtual edge $\eps$ of a skeleton~$\skel(\mu)$, the \emph{expansion graph} of $\eps$, denoted by $\expd(\eps)$ is the pertinent graph $\pert(\mu')$ of the neighbour $\mu'$ of $\mu$ corresponding to $\eps$ when considering $\mathcal T$ rooted at $\mu$.

The \emph{SPQR-tree} of $G$ with respect to the edge $uv$, originally introduced by Di Battista and Tamassia~\cite{dt-ogasp-90}, is the (unique) smallest decomposition tree~$\mathcal T$ for $G$.  Using a different edge $u'v'$ of $G$ and a composition of $G-u'v'$ corresponds to rerooting $\mathcal T$
at the node representing $u'v'$.  It thus makes sense to say that $\mathcal T$ is the SPQR-tree of $G$.  The SPQR-tree of $G$ has size linear in $G$ and can be computed in linear time~\cite{gm-lis-01}.  Planar embeddings of $G$ correspond bijectively to planar embeddings of all skeletons of
$\mathcal T$; the choices are the orderings of the parallel edges in P-nodes and the embeddings of the R-node skeletons, which are unique up to a flip.  When considering rooted SPQR-trees, we assume that the embedding of $G$ is such that the root edge is incident to the outer face, which is
equivalent to the parent edge being incident to the outer face in each skeleton.
We remark that in a planar embedding of $G$, the poles of any node $\mu$ of $\mathcal{T}$ are incident to the outer face of
$\pert(\mu)$. Hence, in the following we only consider embeddings of the pertinent graphs with their poles lying on the same face and refer to such embeddings as {\em regular}. 

Let $\mu$ be a node of $\mathcal T$, we denote the poles of $\mu$ by $u(\mu)$ and $v(\mu)$, respectively.
In the remainder of the paper, we will assume edge $(u(\mu),v(\mu))$ to be part of $\skel(\mu)$ and $\pert(\mu)$.
The outer face of a  (regular) embedding of $\pert(\mu)$ is the one obtained from such an embedding after removing the $(u(\mu),v(\mu))$ connecting its poles.
%\todo{it would be useful to say that we only consider embeddings of the skeleton of a node $\mu$ such that the poles of $\mu$ are incident to the outer face and the edge representing the parent has been removed. In this way it is easy to talk about the outer face of $\mu$ (See for example the construction of the auxiliary graph $O$ in the proof of Lemma~\ref{{lem:general-algo-pnode}}).}
Also, the two paths incident to the outer face of $\pert(\mu)$ between $u(\mu)$ and $v(\mu)$ are called {\em boundary paths} of $\pert(\mu)$.

\section{Omitted Proofs from Section~\ref{sec:complexity}}

\begin{lemma}
  \fcc is in NP.
\end{lemma}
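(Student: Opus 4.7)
The plan is straightforward: exhibit a polynomial-size certificate for ``yes''-instances together with a polynomial-time verifier. Since the statement merely formalises the membership argument already sketched in Section~\ref{sec:complexity}, I would follow that outline.

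First I would describe the certificate: a subset $\mathcal C'\subseteq\mathcal C$ with $|\mathcal C'|=k$. This has size $O(|\mathcal C|\cdot n)$, hence polynomial in the input. The verifier then needs to check, in polynomial time, that some planar embedding $\mathcal E$ of $G$ realises every cycle in $\mathcal C'$ as a facial cycle. For this I would reuse the auxiliary-graph construction mentioned just before Theorem~\ref{thm:general-hardness}: build $G'$ from $G$ by subdividing every edge of every $C \in \mathcal C'$ exactly once, and, for each $C\in\mathcal C'$, adding a fresh vertex $v_C$ joined to all subdivision vertices lying on $C$. Then I would have the verifier accept iff $G'$ is planar, which can be tested in linear time in $|V(G')|+|E(G')|=O(n+\sum_{C\in\mathcal C'}|C|)$, hence polynomially.

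The only real content is the correctness of this verification. For the forward direction, given an embedding $\mathcal E$ of $G$ in which each $C\in\mathcal C'$ bounds a face $f_C$, I would place the subdivision vertices on the edges of $C$ and put $v_C$ into $f_C$, drawing the new edges inside $f_C$ without crossings; this is possible because the subdivision vertices all lie on the boundary of $f_C$ in circular order. For the converse, from a planar embedding of $G'$ I would remove each $v_C$ together with its incident edges and smooth out the subdivision vertices to recover an embedding $\mathcal E$ of $G$; in $G'$ the cycle $C$ together with $v_C$ forms a wheel-like subgraph, and the face of $G'$ that contained $v_C$ collapses, after deletion and smoothing, to a face of $\mathcal E$ whose boundary is precisely $C$.

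I do not expect any real obstacle here: both the certificate size and the verification run in polynomial time, and the equivalence between planarity of $G'$ and the existence of the desired embedding of $G$ is essentially the standard argument for testing whether a prescribed set of cycles can be realised as facial cycles, already invoked by the authors. The mildly delicate point is to state cleanly that smoothing subdivision vertices of degree $2$ preserves combinatorial embeddings and maps the face formerly containing $v_C$ to a face bounded exactly by $C$; this is routine and closes the argument.
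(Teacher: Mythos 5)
Your proposal is correct and is essentially the paper's own argument: Section~\ref{sec:complexity} states exactly this certificate (a $k$-subset $\mathcal C'\subseteq\mathcal C$) and verifier (the subdivision-plus-apex construction followed by a planarity test), while the paper's appendix proof is an even terser variant that guesses the combinatorial embedding of $G$ directly and checks it. The one step you gloss over --- that no part of $G$ can be embedded on the same side of $C$ as $v_C$, so that after deleting $v_C$ and smoothing, the merged face is bounded by exactly $C$ --- relies on the biconnectivity of $G$ (otherwise a component attached to a single vertex of $C$ could hide in one of the wheel's sectors), but the paper invokes the same construction without proof, so this does not set you apart from it.
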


\begin{proof}
  Let $\langle G, \mathcal{C}, k\rangle$ be an instance of \fcc.  A non-deterministic Turing machine can guess in polynomial-time a combinatorial embedding $\mathcal E$ of $G$ and test whether at least $k$ cycles in $\mathcal C$ are facial cycles in~$\mathcal E$.
\end{proof}

\rephrase{Theorem}{\ref{thm:general-hardness}}{
  \fcc is NP-complete, even if
  \begin{compactenum}[(i)]
%  \item any two cycles in~$\mathcal C$ share at most two vertices, and
   \item each cycle in~$\mathcal C$ intersects any other cycle in~$\mathcal C$ in at most two vertices, and
  \item each cycle in~$\mathcal C$ intersects at most three other cycles of~$\mathcal C$ in more than one vertex.
  \end{compactenum}}

\begin{proof}
We give a reduction from \mislong in 3-connected cubic planar graphs, which is NP-complete~\cite{dr-shrtpg-16}.
Let~$H$ be a $3$-connected cubic planar graph.
Observe that $H$ has a unique combinatorial embedding up to a flip.
We construct an instance $\langle G, \mathcal{C}, k\rangle$ of \fcc as follows.  We take the planar dual~$H^\star$ of~$H$ and take~$\mathcal C$ as the set of facial cycles of~$H^\star$.
Observe that $H^\star$  is a planar triangulation, since $H$ is cubic and $3$-connected.
The graph~$G$ is obtained from~$H^\star$ by adding for each edge~$e = uv \in E(H^\star)$ an \emph{edge vertex}~$v_e$ that is adjacent to both~$u$ and~$v$; see Fig.~\ref{fig:general-hardness}.  We claim that~$H$ admits an
independent set of size~$k$ if and only if~$G$ admits a combinatorial embedding where~$k$ cycles in~$\mathcal C$ are facial.

Note that the embedding of~$H^\star \subseteq G$ is unique up to a flip.
The only embedding choices for~$G$ are to decide, for each edge~$e \in E(H^\star)$, in which of the two faces incident to~$e$ in~$H^\star$ the vertex~$v_e$ is embedded.
A cycle in~$\mathcal C$ bounding a face of~$H^\star$ forms a facial cycle in the
embedding of~$G$ if and only if no edge vertex is embedded inside it.
Note that no two cycles in $\cal C$ sharing an edge $e \in E(H^\star)$ can both bound a face of $G$ since the shared edge vertex $v_e$ must be embedded in the interior of one of the two faces of $H^\star$ incident to $e$.
It follows that an embedding with~$k$ facial cycles in $\mathcal C$ induces a set of
independent faces in~$H^\star$, and thus an independent set of size~$k$ in~$H$.
Conversely, by embedding the edge vertices outside the faces of $H^\star$ corresponding to an independent set of size~$k$ in~$H$, we obtain an embedding of~$G$ with~$k$ facial cycles in $\mathcal C$.

Observe that since~$H^\star$ is a $3$-connected, no two faces of $H^\star$ (resp. no two cycles in~$\mathcal C$ sharing an edge) share more than two vertices and, moreover, since $H^\star$ is a planar triangulation, no cycle of~$\mathcal C$ shares two vertices with more than three other cycles.  By construction $\mathcal C$ satisfies the restrictions in the statement of the theorem.
\end{proof}

\rephrase{Theorem}{\ref{thm:sp-hard}}{
  \fcc is NP-complete for series-parallel graphs, even if
  any two cycles in~$\mathcal C$ share at most three vertices.
}

\begin{proof}
We give a reduction from {\sc Hamiltonian Circuit}, which is known to be NP-complete even for cubic graphs~\cite{gjt-tphcpn-76}.  Let~$H$ be any such a graph.

Each vertex~$a \in V(H)$ is represented by the following gadget~$G_a$.  It consists of the graph~$K_{2,3}$, where the vertices in the partition of size~2 are denoted~$s^a$ and~$v^a$ and the other vertices are denoted~$u_1^a,u_2^a,u_3^a$, and of an additional vertex~$t^a$ adjacent to~$v^a$; see
Fig.~\ref{fig:sp-Ga}.  The graph~$G$ is obtained by merging the vertices~$s_a$ into a single vertex~$s$ and the vertices~$t_a$ into a single vertex~$t$.

To define~$\mathcal C$, we number the incident edges of
each vertex of~$H$ from~1 to~3.  If~$ab$ is the $i$-th edge for~$a$
and the $j$-th edge for~$b$, we define~$C_{ab} \in \mathcal C$ as the
cycle~$(s,u_i^a,v^a,t,v^b,u_j^b,s)$; see Fig.~\ref{fig:sp-hamcycle}
and~\ref{fig:sp-G}.  We claim that~$G$ admits a combinatorial embedding
with~$|V(H)|$ facial cycles in~$\mathcal C$ if and only if~$H$ is Hamiltonian.

Assume that~$Q$ is a Hamiltonian circuit of~$H$.  We embed the graph~$G$ such that the order of the gadgets~$G_a$ around~$s$ is the same as the order of the vertices along~$Q$.  Now, for each edge~$ab \in Q$ the gadgets~$G_a$ and~$G_b$ are adjacent in this order, say with~$G_a$
before~$G_b$.  Assume that~$ab$ is the $i$-th edge for~$a$ and the~$j$-th edge for~$b$.  We choose the order of the vertices~$u_i^a$ in~$G_a$ and the vertices~$u_j^b$ in~$G_b$ such that~$u_i^a$ and~$u_j^b$ are incident to the face shared by~$G_a$ and~$G_b$.  Thus~$C_{ab}$ bounds a face.  The resulting
embedding clearly has~$|V(H)|$ facial cycles in~$\mathcal C$.

Conversely, assume that~$G$ has a combinatorial embedding with at least~$k =|V(H)|$ facial cycles in~$\mathcal C$.  Consider any two adjacent gadgets~$G_a$ and~$G_b$.  Since there are~$|V(H)|$ facial cycles in~$\mathcal C$, it follows that the face between~$G_a$ and~$G_b$ must be bounded by
the cycle~$C_{ab}$ in~$\mathcal C$.  But this implies that~$a$ and~$b$ are adjacent in~$H$.  Hence the circular order of the gadgets around~$s$ determines a Hamiltonian circuit of~$H$.
Observe that any two cycles of~$\mathcal C$ share at most three vertices.
\end{proof}

\section{Omitted proofs from Section~\ref{sec:polyn-solv-cases}}

\rephrase{Lemma}{\ref{lem:sp-pnode-properties}}{
  Let $G$ be a series-parallel graph and $\mathcal C$ be a set of
  cycles in $G$ such that any two cycles share at most two vertices.
  For each P-node $\mu$ any two relevant are either edge-disjoint
  in~$\skel(\mu)$ or they share the unique virtual edge
  of~$\skel(\mu)$ that corresponds to a Q-node child of~$\mu$, if any.
}

\begin{proof}
  Let~$C$ and~$C'$ be two relevant cycles for some P-node~$\mu$ with poles~$u$ and~$v$.  Clearly~$C$ and~$C'$ share the two poles $u$ and $v$.  Now assume that~$C$ and~$C'$ additionally share a virtual edge~$e$ of~$\skel(\mu)$.  Consider the expansion graph~$G_e$ of~$e$ and observe
  that~$\{u,v\}$ cannot be a separation pair of~$G_e$, since~$\mu$ is a P-node.  Thus the corresponding child~$\nu$ of~$\mu$ must be either a Q- or an S-node.  If it is an S-node, however, then~$G_e$ contains a cutvertex~$c$, which is contained in both~$C$ and~$C'$, a contradiction.
  Further observe that a P-node may have at most one child that is a Q-node.
  This concludes the proof.
%  It follows that any two relevant cycles for~$\mu$ are either edge-disjoint in~$\skel(\mu)$ or they share a virtual edge corresponding to a Q-node. 
\end{proof}

\rephrase{Theorem}{\ref{lem:sp-algo-atMostTwoVertices}}{
  {\sc Max Facial $\mathcal C$-Cycles} is solvable in $O(n)$ for
  series-parallel graphs if any two cycles in $\mathcal C$ share at
  most two vertices.}

\begin{proof}
  We use again a bottom-up approach as in the previous section.  Q-nodes can be handled trivially as before, and S-nodes can be handled by Lemma~\ref{lem:general-algo-snode}.
It remains to deal with the P-nodes.  Here we use the special structure guaranteed by Lemma~\ref{lem:sp-pnode-properties}.  

Let $\mu$ be a P-node with $k$ children and let $I \in I(\mu)$.  First
consider all cycles in $R(\mu)$ whose projections do not traverse a
Q-node child of $\mu$.  By Lemma~\ref{lem:sp-pnode-properties} they
are pairwise disjoint, and we realize each such cycle $C$ if it has
positive gain, i.e., if and only if $\gain(\{C\}, I) > 0$.  For the
remaining cycles, which all share the same virtual edge $e$ that
corresponds to a Q-node child $\nu$ of $\mu$, we observe that at most
two of them can be realized, and their gains are again independent
since they share only $e$ and they are disjoint from the other
realized cycles.  Altogether, this allows to fill the table $T[\mu,I]$
in $O(k)$ time for each $I \in I(\mu)$.  Note that $|I(\mu)| \le 1$
unless the parent $\mu_0$ of $\mu$ is a Q-node, in which case the
algorithm has reached the root of the SPQR-tree, and we simply choose
$I$ greedily so that it contains up to two cycles whose projections
contain the parent edge and that have positive gain.
\end{proof}

\section{Omitted Proofs from Section~\ref{sec:appr-algor}}

\rephrase{Lemma}{\ref{apx:r-node}}{
  Let~$\apxT[\mu,\cdot]$ denote the table computed in the above
  fashion.  Then~$\apxT[\mu,\cdot]$ is a
  $\max\{c,(4+\eps)\}$-approximation of~$T[\mu,\cdot]$ provided
  that~$\apxT[\mu_i,\dots]$ is a $c$-approximation
  of~$T[\mu_i,\cdot]$.
}

\begin{proof}
  We first show $\apxT[\mu,\cdot] \le T[\mu,\cdot]$ by constructing
  for each $I \in I(\mu)$ an embedding of $\pert(\mu)$ that realizes
  $I$ and that has $\apxT[\mu,I]$ realized cycles from $\mathcal C$.
  Fix $I \in \mathcal I(\mu)$ and let $C_X$ denote the set of cycles
  in $\mathcal C$ determined as above.  Then, it is $\apxT[\mu,I] =
  \sum_{i=1}^k \apxT[\mu_i, (I \cup C_X) \cap I(\mu)] + |X| \le
  \sum_{i=1}^k T[\mu_i, (I \cup C_X) \cap I(\mu)] + |X|$ since the
  $\apxT[\mu_i,\cdot]$ are $c$-approximations of $T[\mu_i,\cdot]$.
  Thus, there exist embeddings of the $\pert(\mu_i)$ with the
  corresponding number of facial cycles in $\mathcal C$.  By combining
  them according to the embedding of $\skel(\mu)$, we obtain a planar
  embedding of $\pert(\mu)$, which in addition to the facial cycles of
  the $\pert(\mu_i)$ in $\mathcal C$ has $|X|$ facial cycles in
  $\mathcal C$ that project to faces of $\skel(\mu)$.  This proves the
  claim.

  Conversely, consider $T[\mu,I]$ and a corresponding embedding of
  $\pert(\mu)$.  Let $C_\opt$ denote the faces of $\mathcal C
  \setminus I$ that bound faces of $\pert(\mu)$.  By the 4-color
  theorem~\cite{rsst-efcpg-96}, it is possible to 4-color the faces in
  $C_\opt$ such that two faces have the same color only if they are
  disjoint.  Let $C'$ denote the largest color class and observe that
  $|C'| \ge |C_\opt| / 4$.

  Again, as in Lemma~\ref{lem:apx-pnode}, we would like to compare the
  sizes of $C'$ and $X$ and argue that $X$ cannot be much smaller than
  $C'$ since it is an approximation of a maximum independent set in
  $H$ and $C'$ corresponds to an independent set of faces in
  $\skel(\mu)$.  Again, the problem is that $C'$ may contain cycles
  that project to faces of $\skel(\mu)$ for which no vertex is
  contained in $H$ due to approximation errors in the tables
  $\apxT[\mu_i,\cdot]$.  As before, we argue that the contribution of
  these cycles is irrelevant in the approximation, and hence this
  consideration applies.

  Let $C_1' = C' \setminus V(H)$ and let $C_2' = C' \setminus C_1'$.
  Recall that $J=\{1,\dots,k\}$ and let $J_1 = \{ i \in J \mid \exists
  C \in C_1'$ that intersects $\mu_i\}$ be the indices of children
  that are intersected by a cycle in $C_1'$.  The set $J_2 = J
  \setminus J_1$ contains the remaining indices.  

  Clearly, we
  have~$T[\mu,I] = \sum_{i=1}^k T[\mu_i, (I \cup C_\opt) \cap \mathcal I(\mu_i)] +
  |C_\opt|$ according to Lemma~\ref{lem:general-algo-rnode}.
  Realizing instead of~$C_{\opt}$ just the set of cycles~$C'$
  drops at most~$|C_\opt|/4$ facial cycles in $\cal C$, while
  imposing weaker interface constraints on the children. We therefore have
  \begin{equation}
    \label{eq:1}
    T[\mu,I] = \sum_{i=1}^k T[\mu_i, (I \cup
    C_\opt) \cap \mathcal I(\mu_i)] + |C_\opt| \le \sum_{i=1}^k T[\mu_i, (I \cup
    C') \cap \mathcal I(\mu_i)] + 4|C'|
  \end{equation}
  We now use the fact that the~$\apxT[\mu_i,\cdot]$ are
  a~$c$-approximation of the~$T[\mu_i,\cdot]$, and hence also a $c'$-approximation for $c' = \max\{c,4\}$, and we also separate the sum by the
  index set~$J_1$ and~$J_2$ and consider the two sets of cycles~$C_1'$
  and~$C_2'$ separately.
  \begin{align}
    \label{eq:J1J2-rnode}
      \sum_{i=1}^k T[\mu_i, (I \cup C') \cap \mathcal I(\mu_i)] + 4|C'| & \le 
%\phantom{~~~~~~~~~~~~~~~~~~~~~~~~~~~~~~~~~~~~~~~~~~~~}
 \nonumber  
  c' \cdot \sum_{i \in J_1} \apxT[\mu_i, (C_1' \cup I) \cap \mathcal I(\mu_i)] + 4 |C_1'|\\
& + c' \cdot  \sum_{i \in J_2} \apxT[\mu_i, (C_2' \cup I) \cap \mathcal I(\mu_i)] + 4|C_2'|.
  \end{align}

  Observe that the indices of the children intersected by cycles that
  correspond to an vertex in $H$ are all contained in~$J_2$.  By the
  definition of~$H$, we thus have~$\apxT[\mu_i, (C_2' \cup I) \cap
  \mathcal I(\mu_i)] = \apxT[\mu_i, I \cap \mathcal I(\mu_i)]$, for $i
  \in J_2$.
  
  For the first term, observe that, for each cyclde $C \in C_1'$, we
  have~$\apxT[\mu_x, (\{C_1'\} \cup I) \cap \mathcal I(\mu_x)] \le
  \apxT[\mu_x, I \cap \mathcal I(\mu_x)] - 1$ for at least one~$x \in
  J_f$, where $f$ is the facial cycle of $\skel(\mu)$ to which $C$
  projects.  Otherwise the vertex would be in~$H$, and hence
  in~$C_2'$.  Let~$J_1' \subseteq J_1$ denote the set of indices where
  this happens and let~$J_1'' = J_1 \setminus J_1'$.  Observe
  that~$|J_1'| \ge |C_1'|$.  We thus have the following.

  \begin{align}
 &c' \cdot \sum_{i \in J_1} \apxT[\mu_i, (C_1' \cup I) \cap \mathcal I(\mu_i)] + 4|C_1'|  \\
 & = c' \cdot \sum_{i \in J_1'} \apxT[\mu_i, (C_1' \cup I) \cap \mathcal I(\mu_i)]
        + c' \cdot \sum_{i \in J_1''} \apxT[\mu_i, (C_1' \cup
        I) \cap \mathcal I(\mu_i)] + 4|C_1'| \nonumber\\
 & \le c' \cdot \sum_{i \in J_1'} (\apxT[\mu_i, I \cap \mathcal I(\mu_i)] - 1) + c' \cdot \sum_{i \in J_1''} \apxT[\mu_i, I \cap \mathcal I(\mu_i)] + 4 |C_1'| \nonumber\\ 
 & \le  c' \cdot \left( \sum_{i \in J_1} \apxT[\mu_i, I \cap \mathcal I(\mu_i)] - |J_1'|\right) + 2|M_1'| \le  c' \cdot \sum_{i \in J_1} \apxT[\mu_i, I \cap \mathcal I(\mu_i)] \nonumber
 \end{align}

  Where the last step uses the fact that~$c' \ge 4$.  Plugging this
  information into Eq.~\ref{eq:J1J2-rnode}, yields the following.
  
  \begin{align}
    & c' \cdot \sum_{i \in J_1} \apxT[\mu_i, (C_1' \cup I) \cap \mathcal I(\mu_i)] + 4
      |C_1'| + c' \cdot \sum_{i \in J_2} \apxT[\mu_i,
      (C_2' \cup I) \cap \mathcal I(\mu_i)] + 4 |C_2'| \\
    & \le c' \cdot \sum_{i=1}^k \apxT[\mu_i, I \cap \mathcal I(\mu_i)] + 4|C_2'|
    \le c' \cdot \sum_{i=1}^k \apxT[\mu_i, I \cap \mathcal I(\mu_i)] + 4(1+\eps/4)|X| \nonumber \\
    & \le \max\{c',4+\eps\} \cdot \left(\sum_{i=1}^k \apxT[\mu_i, I \cap \mathcal I(\mu_i)] + |X|\right) = \max\{c,4+\eps\} \apxT[\mu_i,I]\nonumber
  \end{align}

  Where the last third and second to last steps us the fact that~$X
  \subseteq V(H)$ is a $(1+\eps/4)$ approximation of a maximum
  independent set in $H$ maximum matching, and hence $(1+\eps/4) |X|$
  is at least as large as $|C_2'|$, and that~$c' \ge 4$, respectively.
\end{proof}
\end{document}